\newcommand{\bra}[1]{\ensuremath{\left\langle#1\right|}}
\newcommand{\ket}[1]{\ensuremath{\left|#1\right\rangle}}
\newcommand{\braket}[2]{\ensuremath{\left\langle#1|#2\right\rangle}}
\newcommand{\ketbra}[2]{\ensuremath{\left|#1\rangle\langle#2\right|}}
\newtheorem{theorem}{Theorem}
\theoremstyle{definition}
\newtheorem{definition}{Definition}
\newtheorem{conjecture}{Conjecture}
\newtheorem{proposition}{Proposition}
\xpatchcmd{\proof}{\topsep6\p@\@plus6\p@\relax}{}{}{}
\definecolor{codegreen}{rgb}{0,0.6,0}
\definecolor{codegray}{rgb}{0.5,0.5,0.5}
\definecolor{codepurple}{rgb}{0.58,0,0.82}
\definecolor{backcolour}{rgb}{0.95,0.95,0.92}
\lstdefinestyle{mystyle}{
    backgroundcolor=\color{backcolour},   
    commentstyle=\color{codegreen},
    keywordstyle=\color{magenta},
    numberstyle=\tiny\color{codegray},
    stringstyle=\color{codepurple},
    basicstyle=\ttfamily\footnotesize,
    breakatwhitespace=false,         
    breaklines=true,                 
    captionpos=b,                    
    keepspaces=true,                 
    numbers=left,                    
    numbersep=5pt,                  
    showspaces=false,                
    showstringspaces=false,
    showtabs=false,                  
    tabsize=2
}
\begin{document}

\preprint{APS/123-QED}

\title{Quantum walks advantage on the dihedral group for uniform sampling problem\\}

\author{Shyam Dhamapurkar}
 \email{shyam18596@gmail.com }
\affiliation{Shenzhen Institute for Quantum Science and Engineering (SIQSE),
Southern University of Science and Technology, Shenzhen 518055, China}

\author{Yuhang Dang}
\affiliation{Shenzhen Institute for Quantum Science and Engineering (SIQSE),
Southern University of Science and Technology, Shenzhen 518055, China}
\author{Saniya Wagh}
\affiliation{Department of Mathematics, Tata Institute of Fundamental Research, India}

\author{Xiu-Hao Deng}
\email{ dengxh@sustech.edu.cn }
\affiliation{Shenzhen Institute for Quantum Science and Engineering (SIQSE),
Southern University of Science and Technology, Shenzhen 518055, China}
\affiliation{International Quantum Academy (SIQA), Futian District, Shenzhen 518048, China}

\date{\today}

\begin{abstract}
Random walk algorithms are crucial for sampling and approximation problems in statistical physics and theoretical computer science. The mixing property is necessary for Markov chains to approach stationary distributions and is facilitated by walks. Quantum walks show promise for faster mixing times than classical methods but lack universal proof, especially in finite group settings. Here, we investigate the continuous-time quantum walks on Cayley graphs of the dihedral group $D_{2n}$ for odd $n$, generated by the smallest inverse closed symmetric subset. We present a significant finding that, in contrast to the classical mixing time on these Cayley graphs, which typically takes at least order $\Omega(n^2 \log(1/2\epsilon))$, the continuous-time quantum walk mixing time on $D_{2n}$ is of order $O(n (\log n)^5 \log(1/\epsilon))$, achieving a quadratic improvement over the classical case. Our paper advances the general understanding of quantum walk mixing on Cayley graphs, highlighting the improved mixing time achieved by continuous-time quantum walks on $D_{2n}$. This work has potential applications in algorithms for a class of sampling problems based on non-abelian groups.    
 \end{abstract}

\keywords{Continuous time quantum walks, Cayley graphs, Non-abelian groups, Mixing time}

\maketitle

\section{{\em G\MakeLowercase{eneral introduction}}}

\textit{Background.- }Quantum computing promises algorithmic speedup compared to its classical counterpart \cite{Magniez05quantumalgorithms, Ambainis_element_distinctness, Childs_2004, Shor_Algo, grover1996fast}. Most of the quantum advantage campaigns are based on digital circuits ~\cite{simon1997power,lloyd2010quantum,farhi2014quantum}, such as Grover's amplitude amplification \cite{grover1996fast}, Shor's quantum Fourier transform \cite{Shor_Algo}, etc. However, analog or hybrid algorithms associated with sampling problems are attracting more and more interest in the NISQ era because of the efforts to push forward the application of quantum computing, such as QAOA, Metropolis-Hashing, and Hybrid Monte Carlo ~\cite{farhi2014quantum, Lemieux_2020, kim2012hybrid, cain2023quantum}. Among these algorithms, quantum walk emerges as a promising option for NISQ devices by showcasing exponential speedups of the hitting process on some graph structures for searching problems \cite{childs2003exponential}. The study of quantum walks has further extended to investigate mixing time on various graphs such as Erdos Renyi networks \cite{chakraborty2020fast}. This could help solve problems from a class of $\# P$ complete problems - the approximate sampling problems \cite{Richter_2007}. Notably, two specific problems, namely sampling a uniform permutation through card shuffling and reaching the Gibbs state using Glauber dynamics, are equivalent, e.g., for Refs.~\cite{ding2009mixing, saloff2004random}. It shows how probability theory and statistical mechanics are interlinked. The card shuffling problem can be viewed as a random walk on symmetric group $S_n$. Both discrete and continuous time quantum walks have been performed on the $S_n$ group to determine the mixing time~\cite{gerhardt2003continuous, banerjee2022discrete}.  

\begin{figure*}[t]
    \centering
    \subfigure[]{ \includegraphics[width=0.3\textwidth,keepaspectratio]{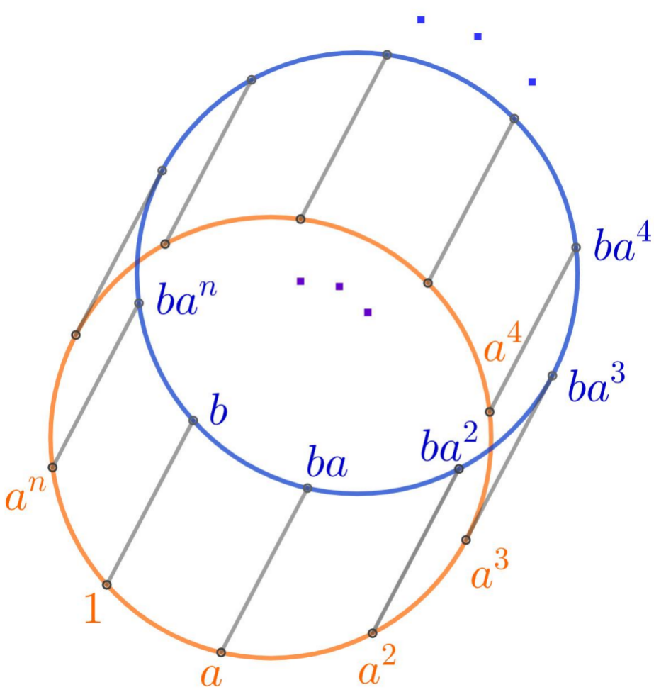}
    \label{Dg}
    }
    \subfigure[]{
     \includegraphics[width=0.43\textwidth,keepaspectratio]{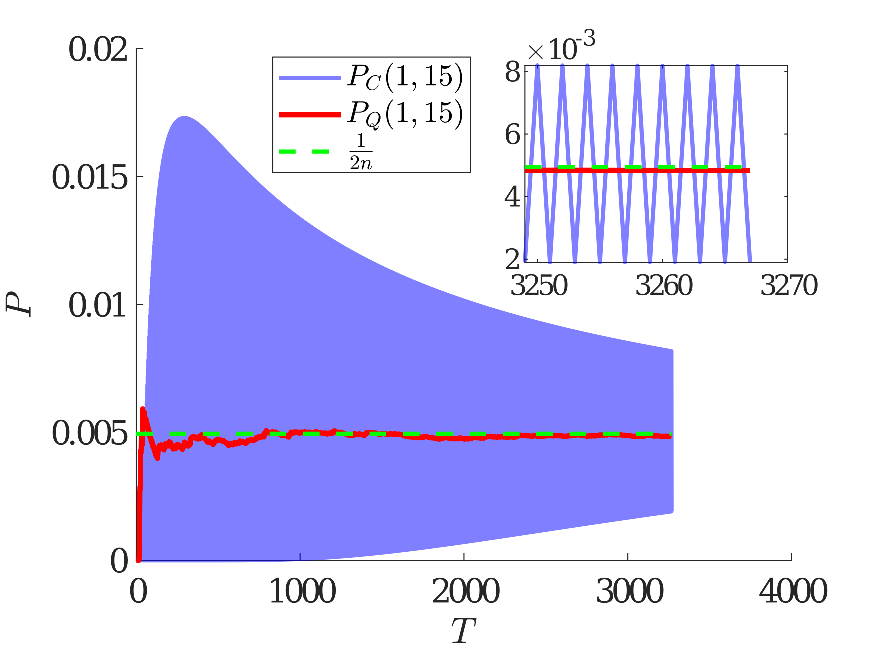}
     \label{P15}
    }
    \caption{ In Figure~\ref{Dg}, we depict the Cayley graphs we examine in this study, denoted as $\Gamma(D_{2n}, S = \{a, a^{-1}, b\})$. An edge $(g, h)$ exists between vertices $g$ and $h$ if $gh^{-1} \in S$, where $g$ and $h$ are elements of the group $D_{2n}$. In Figure~\ref{P15}, we provide an illustrative example for the case of $n = 101$. The quantum walk probability, denoted as \textcolor{red}{$P_Q(1,15)$}, approaches a uniform probability value of $1/2n = 0.005$ within a time of $T = O(n(\log{(n)})^5)$. On the other hand, the classical walk probability, represented as \textcolor{blue}{$P_C(1,15)$}, only fluctuates around $1/2n$. We establish the phenomenon in this work.}  
\end{figure*}
\textit{Issues.- }Extensive research has shown that quantum walks have exponential advantages compared to classical random walks on certain graphs ~\cite{venegas2012quantum,kadian2021quantum, Childs_2004, ambainis2001one, PR1, Richter_2007}. A few families of Cayley graphs have also been explored, such as Cayley graphs of Extraspecial groups~\cite{sin2022continuoustimequantumwalkscayley}. There is recent work on well-known graphs like the Johnson, Kneser, Grassman, and Rook graphs to sample (exact) uniformly using quantum walk \cite{wang2024unifyingquantumspatialsearch}. However, the quantum walk mixing time on Cayley graphs of the non-Abelian group is still not settled. Cayley graphs are an essential class of graphs for quantum walks because they are generated from groups and can be used to design quantum algorithms that exploit the symmetries and properties of the underlying group structure. Previously, properties like perfect state transfer, hitting time, and instantaneous uniform mixing have been verified on Cayley graphs of non-abelian groups~\cite{sin, Discrete_walk_Dn,cao2021perfect, three_state_QW}. Also, they can be used to study the quantum dynamics and transport phenomena on discrete structures, such as quantum coherence, entanglement, mixing, localization, and phase transitions~\cite{kendon2007decoherence}. Mathematically, a group is a set of elements equipped with an operation that satisfies closure, associativity, identity, and inverse properties~\cite{gallian2021contemporary}. Cayley graphs visually represent the symmetries of the group. The vertices of the Cayley graph are elements of the groups, and edges show how these elements relate to each other through the groups' operations.

\textit{Methodology.- }Proving the mixing time involves two main components: determining how long it takes for the mixing process to reach the limiting distribution (which may not be uniform) and exploring the possibility of uniform sampling from this distribution. Previously, it has been proved that continuous-time quantum walks with repeated measurements on certain Cayley graphs of a symmetric group $S_n$ do not converge to the uniform distribution~\cite{gerhardt2003continuous}. A remedy is given in reference~\cite{Richter_2007}, where Richter gave a double loop quantum walk algorithm for uniform sampling using quantum walks. He demonstrated that performing approximately $O(\log{(1/\epsilon)})$ iterations of the quantum walk $\text{e}^{-\text{i}Pt}$, where $P$ is a classical Markov chain on the underlying graph $\Gamma$, and selecting $t$ uniformly at random from the interval $[0, T]$, is sufficient to sample uniformly.  
This study focuses on the Cayley graphs of the dihedral group ($D_{2n}$). This group is symmetries of a regular polygon; reflection and rotation are the elements with composition operation. We use the same algorithms as Richter's to show the quadratic speedup on $D_{2n}$. By utilizing lower bounds on mixing time for regular graphs, we establish that classical random walks on Cayley graphs of $D_{2n}$ require at least $\Omega(n^2 \log(1/2\epsilon))$ time to achieve uniform mixing \cite{levin2017markov}. To estimate the mixing time of quantum walks, we employ the upper bound on mixing time provided in the reference~\cite{chakraborty2020fast}, which relies on eigenvalue gaps. We retrieve the adjacency matrix using the Ref.~\cite{GAO} method for the Cayley graph of dihedral groups to find the eigenvalues. The graph $\Gamma$ is generated by a symmetric inverse closed subset $S \in D_{2n}$ and is isomorphic to the semi-Cayley graph of an $n$-cycle, i.e., $\Gamma(D_{2n}, S) \cong \mathbb{Z}_n \rtimes \mathbb{Z}_2$. Our main result is that $O(n (\log{(n)})^5)$ time is sufficient to mix the continuous-time quantum walk with repeated measurements on the dihedral group towards a uniform distribution with $O(\log{(1/\epsilon)})$ iterations. To prove the main theorem, we propose a conjecture on the sum of the inverse of the difference in eigenvalue gaps for a subset of eigenvalues. We support the conjecture with simulations.   

 The work is organized as follows: The first section is dedicated to the preliminaries. Then, we discuss how to get the adjacency matrix for the Cayley graph $\Gamma(D_{2n}, S = \{a, a^{-1}, b\})$ from the semi-Cayley graphs. We give the general formulation to calculate the quantum walk amplitude. Subsequently, we calculate the limiting probability distribution on Cayley graphs of $D_{2n}$. Later, we analyze the mixing time on the Cayley graphs of dihedral groups and show that it is linear in the number of vertices on the graph. We conclude the sections with the results and future directions. The supplementary material includes the quantum walk algorithm, analysis of bounds from the main theorem, conjecture, and derivation of limiting distribution, respectively.

\section{{\em P\MakeLowercase{reliminaries}}}

This section introduces key definitions and propositions related to the Cayley graph, groups, Markov chains, and mixing time~\cite{levin2017markov} since the random walk is a special case of a Markov chain. The mixing time of a random or quantum walk refers to the duration it takes for the distribution of the walker to become $\epsilon$ distance close to its stationary distribution.

\begin{definition}{Cayley Graph:}
Consider a finite $G$ group and let $S \subseteq G$ be a symmetric subset of $G$, i.e., if $g \in S$, then $g^{-1} \in S$ for all $g \in G$. The Cayley graph is defined as $\Gamma(G, S)$, where elements of $G$ are the vertices of the graph $\Gamma$ and an edge $(g,h) \in \Gamma$ if and only if $gh^{-1} \in S$.   
\end{definition}

Suppose the size of $S$ is $d$, then for every vertex in $\Gamma$ has degree $d$. So, the Cayley graphs are $d$-regular graphs.

\begin{definition}{Conjugate:}
Consider a group $G$, let $g, h \in G$ be conjugate if there exists $r \in G$ such that $rgr^{-1} = h$, then $g$ is called a conjugate of $h$.   
\end{definition}

\begin{definition}{Semi-direct product:}
Consider a group $G$, $N$ as the normal subgroup, and $H$ as a proper subgroup of $G$. If $G = N H$ such that $N \cap H = \{e\}$, where e is the group's identity $G$, then $G$ is called a semi-direct product of $N$ and $H$. It can be written as $G = N \rtimes H$. 
\end{definition}

\begin{definition}{semi-Cayley graphs:}
Let $G$ be a group and $R$, $M$, $T$ be its subsets such that $R$ and $M$ are inverse closed and $e \notin R \cup M$. The semi- Cayley graph $\text{SC}(G; R,M,T)$ with the vertex set $G \times \{0,1\}$. To have an edge between vertices $(h,i)$ and $(g,j)$, one of the following holds:
\begin{itemize}
    \item $i = j = 0$  and $gh^{-1} \in R$;
    \item $i = j = 1$  and $gh^{-1} \in M$;
    \item $i = 0, j = 1$ and $gh^{-1} \in T$.
\end{itemize}

\end{definition}

A Markov chain is a stochastic process ${X_0, X_1, X_2, ...}$ with a countable set of states $S$, where the probability of transitioning from one state to another depends only on the current state. Mathematically, for any states $i, j \in S$ and any time steps $t \geq 0$, the Markov property can be expressed as:

$P(X_{t+1} = j | X_t = i, X_{t-1} = x_{t-1}, \dots, X_1 = x_1, X_0 = x_0 ) = P(X_{t+1} = j | X_t = i)$, where $P(X_{t+1} = j | X_t = i)$ represents the probability of transitioning from state $i$ to state $j$ in a one-time step.

\begin{definition}
    Markov chain $P$ has a stationary distribution $\pi$ implies that $P \pi = \pi$.
\end{definition}
 
\begin{definition}
  Consider an irreducible (strongly connected) and aperiodic (non-bipartite) Markov Chain $P$ with a stationary distribution $\pi$. The mixing time(also known as \emph{threshold mixing}) can be defined as follows:

\begin{equation}
   \tau_{\text{mix}} = \text{min} \Big \{ T: \frac{1}{2} \parallel P^{t} - \pi 1^\dagger \parallel_{1} \leq \frac{1}{2e} \hspace{1mm} \forall \hspace{1mm}t\geq T \Big \}, 
\end{equation}
where $\parallel . \parallel_{1}$ is a matrix 1-norm and $1^\dagger$ is all one row vector. 
\end{definition}

\begin{definition}
Given a Markov chain $P$,  
\[d(P) = \text{max}_{jj'}\frac{1}{2}\parallel P(.,j) - P(.,j') \parallel_{1},\]
is called the maximum pairwise column distance. The following inequality holds for $d(P)$.

 \begin{equation}
    \frac{1}{2}\parallel P - \pi 1^\dagger \parallel_{1} \leq d(P) \leq \parallel P - \pi 1^\dagger \parallel_{1}.
 \end{equation}    
\end{definition}

 The distance $d(P)$ is submultiplicative, i.e. 

 \begin{equation}
 \label{submu}
  d(P_{t+t'}) \leq d(P_t) d(P_{t'})   
 \end{equation}
  for any time $t$ and $t'$. This implies that $d(P_t) \leq d(P)^t$ and $d((P_t)^{t'}) \leq d(P_t)^{t'}$.

 \begin{proposition}\label{pro1}\cite{richter2007quantum}
     If $d(P_T) \leq 1/2e$, then $\parallel P_{T}^{T'} - \pi 1^\dagger \parallel_{1} \leq \epsilon$ for some time $T' = O(\log{(1/\epsilon)})$. 
 \end{proposition}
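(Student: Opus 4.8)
The plan is to exploit the submultiplicativity of the pairwise column distance $d(\cdot)$ established in Eq.~\eqref{submu}, together with the hypothesis $d(P_T)\le 1/2e$. First I would recall that by definition $P_T^{T'}$ denotes the $T'$-fold composition of the (already time-$T$-evolved) channel $P_T$, so the relation $d((P_T)^{T'})\le d(P_T)^{T'}$ quoted right after Eq.~\eqref{submu} applies directly. Feeding in the hypothesis gives $d((P_T)^{T'})\le (1/2e)^{T'}$, which decays geometrically in $T'$.

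Next I would pass from the column distance $d$ to the actual quantity of interest $\parallel P_T^{T'}-\pi 1^\dagger\parallel_1$. Here I would use the right-hand inequality in Eq.~(2), namely $\parallel Q-\pi 1^\dagger\parallel_1\le d(Q)$ is not quite what is stated — the stated bound is $\parallel Q-\pi1^\dagger\parallel_1 \le 2\,d(Q)$ up to the factor in $\tfrac12\parallel Q - \pi 1^\dagger\parallel_1 \le d(Q)$, so more precisely $\parallel Q-\pi 1^\dagger\parallel_1 \le 2 d(Q)$. Applying this with $Q=P_T^{T'}$ and noting that $\pi$ is the common stationary distribution of every power, I get $\parallel P_T^{T'}-\pi 1^\dagger\parallel_1 \le 2(1/2e)^{T'}$.

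It then remains to choose $T'$ large enough that $2(1/2e)^{T'}\le\epsilon$. Taking logarithms, this holds as soon as $T'\ge \dfrac{\log(2/\epsilon)}{\log(2e)}$, i.e. $T' = \big\lceil \log(2/\epsilon)/\log(2e)\big\rceil = O(\log(1/\epsilon))$, which is exactly the claimed bound. I would close by remarking that the argument only used that $\pi$ is stationary for $P_T$ (so $\pi 1^\dagger$ is a fixed point of composition) and that $d$ is submultiplicative; no spectral information about $P$ enters at this stage.

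The routine part is the bookkeeping with the constants $2$ and $2e$; the only place a reader might pause is the identity $(\pi 1^\dagger) Q = \pi 1^\dagger$ for any stochastic $Q$ fixing $\pi$, which makes $\pi 1^\dagger$ behave as an idempotent under composition and lets the telescoping $P_T^{T'} - \pi 1^\dagger = (P_T - \pi 1^\dagger)^{T'}$ (or the direct submultiplicative estimate) go through. There is no substantial obstacle here — this proposition is essentially a packaging of the submultiplicativity property, and the main content of the paper lies in verifying the hypothesis $d(P_T)\le 1/2e$ for the specific quantum walk on $\Gamma(D_{2n},S)$, which is handled separately via the eigenvalue-gap bounds.
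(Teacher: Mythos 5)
Your proof is correct and uses exactly the two ingredients the paper lays out immediately before the proposition (the sandwich inequality $\tfrac12\parallel P-\pi 1^\dagger\parallel_1\le d(P)$ and the submultiplicativity in Eq.~\eqref{submu}), which is the standard argument from the cited reference; the paper itself states the proposition without proof, deferring to \cite{richter2007quantum}. Your choice $T'\ge\log(2/\epsilon)/\log(2e)$ and the remark that only stationarity of $\pi$ and submultiplicativity of $d$ are needed are both accurate, so there is nothing to add.
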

 
\section{{\em D\MakeLowercase{ihedral group}}}

In this section, we discuss the dihedral group $D_{2n}$, represented by symmetries of an $n$-regular polygon. We use the isomorphism given in the reference~\cite{GAO} between $\Gamma(D_{2n}, S)$ and semi-Cayley graph of $\mathbb{Z}_n$, allowing us to determine the adjacency matrix of $\Gamma$. The graph exhibits a unique structure, and the walks on $D_{2n}$ are equivalent to those on a specific semi-Cayley graph of $\mathbb{Z}_n$. The obtained adjacency matrix enables further analysis.

The dihedral group $D_{2n}$ is a finite group representing the symmetries of an $n$-regular polygon. It includes elements rotations and reflections and can be described abstractly as $\langle a, b | a^n = 1, b^2 = 1, bab = a^{-1} \rangle$. With $2n$ elements, $D_{2n}$ explicitly includes $\{1, a, b, ba, ba^2, \dots, ba^{n-1}, a^2, a^3, \dots, a^{n-1}\}$. To study quantum walks on $D_{2n}$, we construct the Cayley graph $\Gamma(D_{2n}, S)$ using the symmetric subset $S = \{a, a^{-1}, b \}$ as the generating set. The adjacency matrix $A(g, h)$ of $\Gamma(D_{2n}, S)$ is defined such that 
\begin{equation}\label{adj}
   A(g, h) =\begin{cases}
        1 & \text{ if } gh^{-1} \in S,\\
        0 & \text{ otherwise }.
    \end{cases}
\end{equation}
This representation provides the foundation for analyzing quantum walks on Cayley graphs associated with the dihedral group $D_{2n}$.

 To analyze continuous-time quantum walks on $\Gamma(D_{2n}, S)$, we can use an isomorphic counterpart, semi-Cayley graph on $\mathbb{Z}_n$ denoted as $\text{SC}(\mathbb{Z}_n; R, M, T)$, where $R = M = \{1, n-1\}$, $T = \{0\}$. This choice is advantageous because it simplifies the analysis. Here, $\mathbb{Z}_n$ represents a cyclic group of order $n$. Based on the findings in~\cite{GAO}, we can determine the spectral properties of the adjacency matrix of $\Gamma(D_{2n}, S)$. According to Lemma 4.2 in \cite{GAO} and the definition of semi-Cayley graphs, there exists an isomorphism $\phi$ between $\Gamma(D_{2n}, S)$ and $\text{SC}(\mathbb{Z}_n; R, M, T)$. The isomorphism is defined as $\phi(a^r) = (r,0)$ and $\phi(ba^r) = (n-r,1)$ for $r \in [0,n-1]$. Consequently, performing continuous-time quantum walks on the graph $\text{SC}(\mathbb{Z}_n; R, M, T)$ is equivalent to conducting the same walks on $\Gamma(D_{2n}, S)$. The adjacency matrix is of $\text{SC}(\mathbb{Z}_n; R, M, T)$ is given as follows:

\begin{equation}\label{Eq;Adj}
    A = \begin{bmatrix}
    W_{n} + W_{n}^{n-1} & I \\
    I & W_{n} + W_{n}^{n-1}
    \end{bmatrix}.
\end{equation}

Here, $W_{n}$ is $n \times n$ a circulant matrix given below.

 \begin{equation*}
     W_{n} = \begin{bmatrix}
          0 & 1 & 0 & \dots & 0 \\
          0 & 0 & 1 & \dots & 0 \\
        \vdots & \vdots & \vdots & \vdots & \vdots \\  
         1 & 0 & 0 & \dots & 0
          \end{bmatrix},
 \end{equation*}
 
 with eigenvalues $w^j$ and eigenvectors 
 $v_j^{\top} = (1/\sqrt{n})[1, w^j, w^{2j}, \dots, w^{(n-1)j}]^\top$ for $0 \leq j \leq n-1$, where $w = \text{e}^{(2 \pi \text{i}/n)}$ and $\top$ is transpose. In the next section, we will discuss how to define a unitary evolution operator to do a quantum walk on $\Gamma(D_{2n}, S)$.
 
\section{{\em Q\MakeLowercase{uantum walk on $D_{2n}$}}}

In this section, we state the eigenspectrum of the adjacency matrix $A$ given in Eq.~\eqref{Eq;Adj}. Next, we provide the lower bound for the classical mixing time of a random walk on $\Gamma(D_{2n}, S)$. Afterwards, we discuss the time-averaged quantum walk probability and the limiting distribution of quantum walks on $\Gamma$.

The simple random walk matrix for regular graphs is simply the normalized adjacency matrix of the graph, i.e., $\Bar{A} = \frac{A}{3}$. We use normalized adjacency matrix $\Bar{A}$, which gives us the following normalized $2n$ eigenvalues and eigenvectors, respectively. 

\begin{equation}\label{lambdaaj}
  \lambda_j = \left(1 + 2 \cos{(2\pi j/n )}\right)/3,   \end{equation}
  
  \begin{equation}\label{alphaj}
      \ket{x_j} =\frac{1}{\sqrt{2n}} \begin{bmatrix}
           v_j & v_j
      \end{bmatrix}^\dagger,
  \end{equation}
 for $0 \leq j \leq (n-1)$, and 

\begin{equation}\label{lambdabj}
  \lambda_{j} = \left(2 \cos{\left(2 \pi (j-n)/n\right)} - 1\right)/3,   
\end{equation}
  
  \begin{equation}\label{betaj}
      \ket{y_{j}} =\frac{1}{\sqrt{2n}} \begin{bmatrix}
           v_j  & -v_j
      \end{bmatrix}^\dagger,
  \end{equation}
for $n \leq j \leq (2n-1)$.

To prove the classical mixing time lower bound on $\Gamma(D_{2n}, S)$, we use the Theorem 12.5 given in Ref.~\cite{levin2017markov} which states that for transition matrix $P$ of a reversible, irreducible Markov chain then $\tau_{\text{mix}}(\epsilon) \geq (1/(1- \lambda_2)-1) \log{\left(1/2\epsilon\right)},$ where $\lambda_2$ is the second largest eigenvalue. 
    
Upon examining Eq.\eqref{Eq;Adj}, it becomes evident that the simple random walk $\Bar{A}$ on $\Gamma(D_{2n}, S)$ possesses key properties, namely symmetry (reversibility) and strong connectivity (irreducibility). Additionally, the Cayley graph $\Gamma$ is regular, resulting in a uniform stationary distribution $\pi$, Ref.~\cite{levin2017markov}. We can determine the second largest eigenvalue of $\Bar{A}$ as $\lambda_2 = \left(1 + 2 \cos{(2\pi/n )}\right)/3$ using Eq.\eqref{lambdaaj}. By employing the inequality $1-\cos(x) \leq x^2/2,$ we find that \[\tau_{\text{mix}}(\epsilon) \geq (3n^2/4\pi^2-1) \log{\left(1/2\epsilon\right)}.\] Consequently, the classical mixing time on $\Gamma(D_{2n}, S)$ is at least $\Omega\left(n^2 \log{\left(1/2\epsilon\right)}\right)$.

   Now, we define the unitary operator. Based on the eigen-spectrum of $\Bar{A}$, the continuous-time quantum walk operator $U(t) = \text{e}^{\text{i}\Bar{A}t}$ can be written as
  
  \begin{equation}
     U(t) = \sum_{j = 0}^{n-1} \text{e}^{\text{i} \lambda_j t} \ketbra{x_j}{x_j} + \sum_{j =n}^{2n-1} \text{e}^{\text{i} \lambda_{j} t} \ketbra{y_{j}}{y_{j}}.
  \end{equation}
  
  Let us define $X_j :=\ketbra{x_j}{x_j}$ and $Y_j :=\ketbra{y_{j}}{y_{j}}$ for $0 \leq j \leq (2n-1)$ respectively. Then the probability to go from some vertex $\ket{p}$ to another vertex $\ket{q}$ on the graph $\Gamma(D_{2n}, S)$  in time $t$ is given by
\begin{equation}
\begin{aligned}\label{Eq:prob}
    P_{t}(p,q) &=\Bigg|\frac{1}{2n}\bra{q}\sum_{j = 0}^{n-1} \text{e}^{\text{i}t(2\cos{(2 \pi j/n)}+1)/3} X_j \\
    &+ \sum_{j = n}^{2n-1} \text{e}^{\text{i}t(2\cos{(2 \pi (j-n)/n)}-1)/3} Y_j\ket{p}\Bigg|^2.
\end{aligned}    
\end{equation}

For each $1 \leq p,q \leq 2n$, we get $P_{t}(p,q)$, and that gives us $P_{t}$ matrix, a quantum-generated stochastic matrix. Since the evolution $U(t)$ is unitary, we know that for large times, it will not converge to any specific distribution. Hence, we do probability averaging over an interval of time. It results in the time-averaged probability matrix $\Bar{P}_{T}$, where each $(p,q)$ entry is given by 

\begin{equation}\label{time-avg}
    \Bar{P}_T(p,q) = \frac{1}{T} \int_{0}^{T} P_{t}(p,q) \text{d}t.
\end{equation}

The long-term behavior of this quantum walks $\Bar{P}_T$ always fluctuates around its limiting distribution, which is stationary. We denote the limiting distribution by $\Pi$. Calculating the entries $\Pi(p,q)$ of limiting distribution of a quantum walk on $\Gamma$ is straightforward (refer to supplementary material D). When we take the limit of $T \rightarrow \infty$ in Eq.~\eqref{time-avg}, we find that $P_{T \rightarrow \infty}(p,q)$ is equal to

\begin{equation}\label{Pi}
\Pi(p,q) =\begin{cases} 
\frac{1}{2n} + \frac{n-1}{2n^2} & p =q \text{ or } q-p = n, \\
\frac{1}{2n} - \frac{1}{2n^2} & p \neq q \text{ and } q-p\neq n.
\end{cases}
\end{equation} 
 
It is worth noting that the distribution $\Pi$ described in Eq.~\eqref{Pi} is non-uniform. To sample uniformly from $P_t$(subsequently $\Bar{P}_T$ and $\Pi$), has to have uniform distribution. We show that the following Theorem~\ref{uniformd} holds for the Markov chain $P = \Bar{A}$ and $\Pi$.  
\begin{theorem}\cite{Richter_2007}~\label{uniformd}
  If $P$ is a symmetric irreducible Markov chain on $N$ states, then each entry of $\Pi$ bounded below by $1/N^2$; in particular, $\Pi$ is ergodic. Moreover, each $P_t$ is symmetric and, hence, has a uniform stationary distribution.  
\end{theorem}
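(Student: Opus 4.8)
The plan is to establish the three assertions of Theorem~\ref{uniformd} in turn, exploiting the very explicit structure we have for $P = \bar A$ and for $\Pi$ in our dihedral setting, while keeping the argument general enough to match Richter's statement. First I would record that $P$ symmetric and irreducible forces its unique stationary distribution to be uniform: symmetry means $P = P^\top$, so $\mathbf 1$ is a left eigenvector with eigenvalue $1$ (each column sums to $1$ because each row does, $P$ being stochastic and symmetric), and irreducibility (strong connectivity of $\Gamma$) gives uniqueness via Perron--Frobenius. The same computation applies verbatim to every $P_t = e^{\mathrm i \bar A t}$ \emph{after} one checks $P_t$ is symmetric: since $\bar A$ is a real symmetric matrix, $e^{\mathrm i \bar A t}$ is (complex) symmetric, and the stochastic matrix whose $(p,q)$ entry is $P_t(p,q) = |(e^{\mathrm i\bar A t})_{pq}|^2$ is then real symmetric with nonnegative entries and row sums $1$ (the latter because $e^{\mathrm i \bar A t}$ is unitary, so each row has unit $\ell^2$ norm). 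Hence each $P_t$ is doubly stochastic and symmetric, giving it the uniform stationary distribution; this is the "Moreover" clause.

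Next I would turn to the lower bound $\Pi(p,q) \ge 1/N^2$. Here the cleanest route in our concrete case is simply to read it off Eq.~\eqref{Pi}: with $N = 2n$, the smaller of the two values is $\tfrac{1}{2n} - \tfrac{1}{2n^2} = \tfrac{n-1}{2n^2}$, and one checks $\tfrac{n-1}{2n^2} \ge \tfrac{1}{4n^2} = 1/N^2$ for all $n \ge 1$ (indeed $2(n-1) \ge 1$), so every entry of $\Pi$ is bounded below by $1/N^2$. For the general statement one instead writes $\Pi(p,q) = \sum_{\mu} |\langle q|\Pi_\mu|p\rangle|^2$ over the distinct eigenprojectors $\Pi_\mu$ of the real symmetric $P$, keeps only the projector onto the Perron eigenvector $\tfrac{1}{\sqrt N}\mathbf 1$, which contributes exactly $1/N^2$, and notes all other terms are nonnegative. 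Strict positivity of every entry then immediately yields that $\Pi$, viewed as a transition matrix, is irreducible and aperiodic, i.e.\ ergodic.

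The steps above are essentially bookkeeping once the structural facts are in place; the one point that deserves care — and the place I expect any genuine subtlety to hide — is the claim that $P_t$ is \emph{symmetric}. It is tempting to say "unitary implies column norms are $1$ too," which is true, but symmetry of $P_t$ as a real matrix is really inherited from $\bar A = \bar A^\top$ through $e^{\mathrm i \bar A t} = (e^{\mathrm i \bar A t})^\top$ followed by entrywise modulus-squared; I would spell this out using the spectral decomposition of $\bar A$ in terms of the real orthonormal eigenvectors, so that $(e^{\mathrm i \bar A t})_{pq} = \sum_j e^{\mathrm i \lambda_j t}\, (u_j)_p (u_j)_q$ is manifestly symmetric in $p,q$. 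The remaining obstacle, if one wants Theorem~\ref{uniformd} exactly as quoted rather than just for $\bar A$, is to verify that $P$ being symmetric and irreducible already implies it has nonnegative entries and is row-stochastic — which is part of the hypothesis "Markov chain" — so that Perron--Frobenius applies; I would simply invoke the definition of Markov chain recalled earlier in the Preliminaries. With those checks done, the theorem follows by assembling the three pieces.
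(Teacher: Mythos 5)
Your proposal is correct, but it does substantially more than the paper does: the paper does not prove Theorem~\ref{uniformd} at all --- it quotes it from Richter~\cite{Richter_2007}, and the only accompanying argument is a one-line verification ``by inspection'' that, for $P=\Bar{A}$, every entry of $\Pi$ in Eq.~\eqref{Pi} is at least $1/(2n)^2$ and that $P_t(p,q)=P_t(q,p)$. You carry out that same concrete check, but you also reconstruct the general argument behind the cited theorem: the identity $\Pi(p,q)=\sum_{\mu}|\bra{q}\Pi_{\mu}\ket{p}|^2$ over distinct eigenprojectors, isolating the Perron projector $\tfrac{1}{N}\mathbf{1}\mathbf{1}^{\top}$ (which is the full eigenvalue-$1$ projector precisely because irreducibility makes that eigenvalue simple) to obtain the $1/N^2$ lower bound, and the observation that $P_t$ is symmetric and doubly stochastic, hence has uniform stationary distribution. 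This is essentially Richter's own proof, so you have filled in the citation rather than taken a different route; the added value is that the general spectral argument explains \emph{why} the $1/N^2$ bound holds rather than merely checking it against Eq.~\eqref{Pi}. Two minor points: the inequality $\tfrac{n-1}{2n^2}\geq\tfrac{1}{4n^2}$ requires $n\geq 2$ (your parenthetical ``$2(n-1)\geq 1$'' fails at $n=1$), which is harmless since the paper takes $n$ odd and ultimately $n\geq 100$; and since the eigenvectors the paper actually writes down in Eqs.~\eqref{alphaj} and~\eqref{betaj} are complex, the cleanest route to symmetry of $\text{e}^{\text{i}\Bar{A}t}$ is $(\text{e}^{\text{i}\Bar{A}t})^{\top}=\text{e}^{\text{i}\Bar{A}^{\top}t}=\text{e}^{\text{i}\Bar{A}t}$, exactly as you flag --- your instinct that this is the one step deserving care is right.
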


By inspection, it is clear that each entry of $\Pi$ given in Eq.~\eqref{Pi} is bounded below by $1/(2n)^2$ and $P_t$ in Eq.~\eqref{Eq:prob} is symmetric since $P_t(p,q) = P_t(q,p)$ (and so is $\Bar{P}_T$). Hence, $P_t$ has a uniform stationary distribution. We utilize the double-loop quantum algorithm to achieve uniform sampling, as outlined in Supplementary Material A. This algorithm was originally introduced by Richter in their work \cite{Richter_2007}, and it exhibits a logarithmic dependence on $1/\epsilon$, where $\epsilon$ represents the desired accuracy or precision. This algorithm essentially runs a classical random walk for a duration of $T' = O(\log{(1/\epsilon)})$ using the quantum-generated stochastic matrix $\Bar{P}_T$, (given in Eq.~\eqref{time-avg}).
 We are interested in the minimum time $\tau_{\text{mix}} = T$ such that 
\begin{equation}\label{boundlim}
   \parallel \Bar{P}_T - \Pi \parallel_1 \leq \frac{1}{2e}.  
\end{equation}
 Then, using proposition~\ref{pro1}, $T' = O(\log{(1/\epsilon)})$, repetitions of this walk are adequate for achieving uniform sampling. The following section discusses the quantum mixing time bound based on the inverse sum of eigenvalue gaps to achieve Eq.~\eqref{boundlim}. To do that, we state a conjecture for the subset of eigenvalues of $\Bar{A}$.  

\section{{\em M\MakeLowercase{ixing time bound and conjecture}}}

This section focuses on the quantum mixing time bound, utilizing the eigenvalue gaps of $\Bar{A}$. We present the general quantum mixing time bound based on these eigenvalues. Subsequently, we provide specific bounds for our case and propose a conjecture for certain eigenvalue gaps. Finally, we establish the main result.

Given $\Bar{P}_T$, the quantum mixing bound based on eigenvalues of $\Bar{A}$ (Ref.~\cite{chakraborty2020fast}) on the L.H.S. of Eq.~\eqref{boundlim} is given as follows:   

\begin{equation}\label{qbound}
    \parallel \Bar{P}_T - \Pi \parallel_1 \leq \sum_{\lambda_j \neq \lambda_k} \frac{2 |\braket{x_j \text{ or } y_j}{p}|.  |\braket{p}{x_k \text{ or } y_k}|}{T |\lambda_j - \lambda_k|},
\end{equation}

where without loss of generality $\ket{p}$ is initial state, and $\{ \lambda_j, \ket{x_j},\ket{y_j}\}$ is the eigen spectrum of $\Bar{A}$. From Eq.~\eqref{alphaj} and Eq.~\eqref{betaj}, for $1 \leq p \leq 2n$ we have

\begin{equation}
    |\braket{x_j \text{ or } y_j}{p}|.|\braket{p}{x_k \text{ or } y_k}| = \frac{1}{2n}.
\end{equation}

Using Eq.~\eqref{qbound} and the above calculations, the quantity we need to bound is the following:

\begin{equation}
  \frac{1}{nT} \sum_{\lambda_j \neq \lambda_k} \frac{1}{|\lambda_j - \lambda_k|}.
\end{equation}

We initially partitioned the set of $2n$ eigenvalues into two subsets based on their indices: the first subset consists of eigenvalues given by Eq.~\eqref{alphaj} for $0 \leq j \leq n-1$, and the second subset comprises eigenvalues given by Eq.~\eqref{betaj} for $n \leq j \leq 2n-1$.

To further identify distinct eigenvalues within these subsets, we introduce index subsets $C_1$ and $C_2$ as follows: For $j \in C_1 = [0, \frac{n-1}{2}]$, the eigenvalues satisfy $1 \geq \lambda_j > -\frac{1}{3}$. For $j \in C_2 = [n, \frac{3n-1}{2}]$, the eigenvalues satisfy $\frac{1}{3} \geq \lambda_j > -1$.

Additionally, we define index subsets $C_{1'} = [\frac{n+1}{2}, n-1]$ and $C_{2'} = [\frac{3n+1}{2}, 2n-1]$. In $C_{1'}$, the eigenvalues fall within the range $(1, -\frac{1}{3})$, while in $C_{2'}$, the eigenvalues fall within the range $(\frac{1}{3}, -1)$. Notably, $C_{1'}$ and $C_{2'}$ each contain $n-2$ repeated eigenvalues from the $C_1$ and $C_2$ subsets, respectively, resulting in distinct sets of eigenvalues.
\begin{equation}\label{Eq: eigsum}
 \begin{aligned}
    \sum_{\lambda_j \neq \lambda_k} \frac{1}{|\lambda_j - \lambda_k|} & = 2\sum_{j \in C_1} \Big( \sum_{k \in C_2} \frac{1}{|\lambda_j - \lambda_k|} + \sum_{k \in C_{1'}} \frac{1}{|\lambda_j - \lambda_k|}\\
    &+\sum_{k \in C_{2'}} \frac{1}{|\lambda_j - \lambda_k|}+ \sum_{k \in C_1, \lambda_j \neq \lambda_k} \frac{1}{|\lambda_j - \lambda_k|} \Big)\\
    &+ 2\sum_{j \in C_2} \Big( \sum_{k \in C_1} \frac{1}{|\lambda_j - \lambda_k|} + \sum_{k \in C_{1'}} \frac{1}{|\lambda_j - \lambda_k|} \\
    &+\sum_{k \in C_{2'}} \frac{1}{|\lambda_j - \lambda_k|}+ \sum_{k \in C_2, \lambda_j \neq \lambda_k} \frac{1}{|\lambda_j - \lambda_k|} \Big).
\end{aligned}   
\end{equation}

We simplify Eq.~\eqref{Eq: eigsum} by redefining the ranges of indices $k$ and $j$ to only involve the index sets $C_1$ and $C_2$ (by doing the change of variable $k \rightarrow k+n$ and $j \rightarrow n-j$ ). We get the following equation.

\begin{equation}\label{eqsum}
\begin{aligned}
    \sum_{\lambda_j \neq \lambda_k} \frac{1}{|\lambda_j - \lambda_k|} & = 8 \sum_{j \in C_1} \sum_{k \in C_2} \frac{1}{|\lambda_j - \lambda_k|}\\
    &+ 
    4\sum_{j \in C_1} \sum_{k \in C_1,\lambda_j \neq \lambda_k} \frac{1}{|\lambda_j - \lambda_k|}\\
    &+ 4\sum_{j \in C_2} \sum_{k \in C_2,\lambda_j \neq \lambda_k} \frac{1}{|\lambda_j - \lambda_k|}. 
\end{aligned}    
\end{equation}
Now, we calculate the bound on each sum from Eq.~\eqref{eqsum}. Let us tackle the first sum by simplifying it as follows:
\begin{equation}\label{mainsum}
\begin{aligned}
   &\sum_{j \in C_1} \sum_{k \in C_2} \frac{1}{|\lambda_j - \lambda_k|}\\
   &= 3 \sum_{j \in C_1} \sum_{k \in C_2} \frac{1}{|2\cos{(2\pi j/n)} - 2\cos{(2\pi (k-n)/n)}+2 |} \\
   & = 3\sum_{j \in C_1} \sum_{k \in C_1} \frac{1}{|2\cos{(2\pi j/n)} - 2\cos{(2\pi k)/n)}+2 |}\\
   & = \frac{3}{2}\sum_{j \in C_1} \sum_{k \in C_1} \frac{1}{|\cos{(2\pi j/n)} - \cos{(2\pi k)/n)}+1 |}.
\end{aligned}   
\end{equation}

We divide Eq.~\eqref{mainsum} into four sums based on the range of $j$ and $k$ as follows: 

\begin{equation}
  \begin{aligned}
 &\frac{3}{2}\sum_{j \in C_1} \sum_{k \in C_1} \frac{1}{|\cos{(2\pi j/n)} - \cos{(2\pi k)/n)}+1 |} \\
 &= Su_{1} + Su_2 +Su_3 +Su_4,   
\end{aligned}  
\end{equation}

where 
\begin{equation}
 \begin{aligned}
 & Su_1 = \frac{3}{2}\sum_{j = 0}^{\lfloor\frac{n}{4}\rfloor} \sum_{k = 0}^{\lfloor\frac{n}{4}\rfloor} \frac{1}{|\cos{(2\pi j/n)} - \cos{(2\pi k)/n)}+1 |},\\
&Su_2 = \frac{3}{2}\sum_{j = 0}^{\lfloor\frac{n}{4}\rfloor} \sum_{k = \lceil\frac{n}{4}\rceil}^{(n-1)/2} \frac{1}{|\cos{(2\pi j/n)} - \cos{(2\pi k)/n)}+1 |},\\
&Su_3 = \frac{3}{2}\sum_{j = \lceil\frac{n}{4}\rceil}^{(n-1)/2} \sum_{k = 0}^{\lfloor\frac{n}{4}\rfloor} \frac{1}{|\cos{(2\pi j/n)} - \cos{(2\pi k)/n)}+1 |},\\
& Su_4 = \frac{3}{2}\sum_{j = \lceil\frac{n}{4}\rceil}^{(n-1)/2} \sum_{k = \lceil\frac{n}{4}\rceil}^{(n-1)/2} \frac{1}{|\cos{(2\pi j/n)} - \cos{(2\pi k)/n)}+1 |}.
\end{aligned}   
\end{equation}

We bound each $Su_{l}$ for $1\leq l \leq 4$ separately. The proofs of bounds on $Su_{1}$, $Su_{2}$, and $Su_{4}$ are given in Supplementary material B. The bounds are as follows:   

\begin{equation}
\begin{aligned}\label{sumbound}
  Su_{1} &\leq \frac{3}{8} n^2 \log{(n)},\\
  Su_{2} &\leq \frac{3}{32} n^2,\\
    Su_{4} &\leq \frac{3}{32} n^2 \log{(n)}.
\end{aligned}    
\end{equation}

We could not prove the upper bound on $Su_{3}$ rigorously. Hence, we propose conjecture~\ref{conj}. We give a numerical argument for the bound on $Su_{3}$. The following conjecture is for $n = 4p+1$; similarly, we do for $n = 4p+3$ (refer to Supplementary material C).  

\begin{conjecture}\label{conj}
   Consider $n = 4p+1$ type, where $p \in \mathbb{Z}_{+}$, let $\alpha = \arccos{\Big(1-\sin{\Big(\frac{2\pi}{n}(b+\frac{3}{4})\Big)} \Big)}$, and $N(\alpha) = \lfloor \frac{n}{2\pi} \alpha \rfloor$ for $b \in[0, p-1]$ then

\begin{equation}
\begin{aligned}
Su_{3} &\leq f(n) = \sum_{b=0}^{p-1} f_{\alpha}(b) \\
&\leq 100 n^2 (\log{(n)})^5,  
   \end{aligned}    
\end{equation}
   
  where  \begin{align*}
    f_{\alpha}(b) &= \frac{\pi}{2\alpha}\Bigg[\frac{1}{\alpha} + \frac{1}{\alpha-\frac{2\pi}{n}N(\alpha)} + \frac{1}{\frac{2\pi}{n}(N(\alpha)+1)-\alpha}\Bigg]\\
    &+ \frac{n}{4\alpha} \ln{\Bigg[\frac{\frac{\pi^2}{2}}{ (\alpha-\frac{2\pi}{n}N(\alpha)) (\frac{2\pi}{n}(N(\alpha)+1)-\alpha)} \Bigg]}.  
\end{align*}
\end{conjecture}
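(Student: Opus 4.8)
The plan is to estimate $Su_3$ by replacing the double sum over the lattice points $(j,k)$ with $j\in[\lceil n/4\rceil,(n-1)/2]$, $k\in[0,\lfloor n/4\rfloor]$ by an integral and carefully isolating the region where the denominator $D(j,k)=\cos(2\pi j/n)-\cos(2\pi k/n)+1$ is small. First I would observe that on this index range $\cos(2\pi j/n)\in[-1,0]$ (roughly) while $\cos(2\pi k/n)\in[0,1]$, so $D(j,k)$ can vanish: the near-singular locus is the curve $\cos(2\pi k/n)=1+\cos(2\pi j/n)$, equivalently $2\pi k/n=\arccos(1+\cos(2\pi j/n))$. The strategy is to change variables so that this curve becomes a coordinate line. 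Concretely, write $j=\tfrac{n}{4}+\tfrac{n}{2\pi}\,\theta$ so that $\cos(2\pi j/n)=-\sin\theta$, and let $b$ index a coarse partition of the $j$-range into blocks on which $\sin\theta\approx\sin(\tfrac{2\pi}{n}(b+\tfrac34))$; the quantity $\alpha=\arccos(1-\sin(\tfrac{2\pi}{n}(b+\tfrac34)))$ is then precisely the value of $2\pi k/n$ at which $D$ vanishes for $j$ in block $b$, and $N(\alpha)=\lfloor\tfrac{n}{2\pi}\alpha\rfloor$ locates the nearest integer $k$ to that zero. This is the bookkeeping that produces the two "boundary" terms $\alpha-\tfrac{2\pi}{n}N(\alpha)$ and $\tfrac{2\pi}{n}(N(\alpha)+1)-\alpha$ in $f_\alpha(b)$.

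Next, within a fixed block $b$ I would split the inner sum over $k$ into (i) the two integer values $k=N(\alpha),N(\alpha)+1$ straddling the zero, handled individually, giving the terms with $\tfrac{1}{\alpha-\frac{2\pi}{n}N(\alpha)}$ and $\tfrac{1}{\frac{2\pi}{n}(N(\alpha)+1)-\alpha}$, and (ii) the remaining $k$, where $|D(j,k)|$ is bounded below by a linear function of the distance from $k$ to the zero, so that $\sum 1/|D|$ is dominated by a harmonic-type sum $\sum 1/m$ over $m\lesssim n$, which is $O(\log n)$ — this is the source of the $\tfrac{n}{4\alpha}\ln[\cdots]$ logarithm, with the $1/\alpha$ prefactor coming from the Jacobian of the $j\mapsto\theta$ substitution near $\theta$ small (where $d(\sin\theta)/d\theta=\cos\theta\asymp$ might degenerate). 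Summing $f_\alpha(b)$ over $b\in[0,p-1]$, the worst blocks are those with $\alpha$ or $b$ small, where $1/\alpha$ and $1/(\alpha-\tfrac{2\pi}{n}N(\alpha))$ can be as large as $O(n)$ and the number of such dangerous blocks is polylogarithmic; multiplying the per-block cost $O(n\log n)$ by $O(n)$ bad blocks and tracking logarithmic factors from the nested harmonic sums yields the claimed $O(n^2(\log n)^5)$.

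The main obstacle — and the reason this is stated as a conjecture rather than a lemma — is controlling the accumulation of terms like $\tfrac{1}{\alpha-\frac{2\pi}{n}N(\alpha)}$, i.e. the inverse distance from a point $\tfrac{n}{2\pi}\alpha$ to the nearest integer, summed over all blocks $b$. This is a Diophantine/equidistribution question: one needs that $\{\tfrac{n}{2\pi}\alpha(b)\}_{b=0}^{p-1}$ does not cluster too close to the integers, which depends delicately on the arithmetic of $n$ and on the nonlinear map $b\mapsto\alpha(b)=\arccos(1-\sin(\tfrac{2\pi}{n}(b+\tfrac34)))$. Proving a uniform lower bound on these distances (or showing their reciprocals sum to $O(n\,\mathrm{polylog}\,n)$) seems to require either a careful three-distance-theorem-style argument for this particular sequence or an averaging trick; absent that, I would verify the bound numerically across the relevant range of $n$, as the paper does, and flag the precise gap — the distribution of $\{\tfrac{n}{2\pi}\alpha(b)\}$ modulo $1$ — as the statement needing a rigorous proof. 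The contributions of $Su_1,Su_2,Su_4$, by contrast, avoid this issue because their index ranges keep $D(j,k)$ bounded away from $0$ except along one edge, so only a single harmonic sum appears and the bounds in Eq.~\eqref{sumbound} follow by elementary integral comparison.
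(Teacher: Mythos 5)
Your proposal follows essentially the same route as the paper: reindex $j$ by $b$ so that the vanishing locus of the denominator sits at $2\pi k/n = \alpha(b)$, peel off the two lattice points $k = N(\alpha)$ and $N(\alpha)+1$ straddling it (producing the terms $1/(\alpha-\tfrac{2\pi}{n}N(\alpha))$ and $1/(\tfrac{2\pi}{n}(N(\alpha)+1)-\alpha)$), bound the remaining $k$ by sum-to-integral comparison to get the logarithm, and leave the final estimate $\sum_b f_\alpha(b)\leq 100\,n^2(\log n)^5$ to numerics --- and you correctly identify the distribution of $\tfrac{n}{2\pi}\alpha(b)$ modulo $1$ as the genuinely open Diophantine obstruction. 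The one mechanical point where your sketch deviates: the paper does not linearize the denominator about its zero. That linearization has slope $\sin\alpha$, which degenerates for small $\alpha$ (i.e.\ small $b$), and on the side $2\pi a/n<\alpha$ it gives an \emph{upper} rather than a lower bound on $|D|$. Instead the paper uses the inequality $|(1-\cos\theta)-(1-\cos\theta')|\geq \tfrac{1}{\pi}|\theta^2-\theta'^2|$ followed by the partial-fraction identity $\tfrac{1}{\alpha^2-x^2}=\tfrac{1}{2\alpha}\bigl(\tfrac{1}{\alpha+x}+\tfrac{1}{\alpha-x}\bigr)$; this difference-of-squares step is what makes the lower bound valid uniformly over the whole $k$-range and is the actual source of the $\tfrac{\pi}{2\alpha}$ prefactor in $f_\alpha(b)$ (it is not a Jacobian factor, and $b$ indexes individual values of $j$, not blocks). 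With that ingredient substituted in, your derivation of $f_\alpha(b)$ matches the paper's, and both accounts leave the same step --- the bound on $\sum_b f_\alpha(b)$ --- as the conjectural, numerically supported part.
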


By conducting numerical simulations, we provide evidence supporting the validity of the conjecture (see Supplementary material C). Consequently, we establish a bound on $Su_{3}$ as $100 n^2(\log{(n)})^5$.

Lastly, we analyze the other two sums from Eq.~\eqref{eqsum} where $\lambda_j \neq \lambda_k$. We show that (refer to Supplementary Material B, case 5 for the proof.)

\begin{equation}\label{S5}
     \sum_{j \in C_1} \sum_{k \in C_1,\lambda_j \neq \lambda_k} \frac{1}{|\lambda_j - \lambda_k|} \leq \Big(\frac{8n}{\pi} \log{(n)}\Big)^2 .
\end{equation}

Similarly, we prove

\begin{equation}
    \sum_{j \in C_2} \sum_{k \in C_2,\lambda_j \neq \lambda_k} \frac{1}{|\lambda_j - \lambda_k|} \leq  \Big(\frac{8n}{\pi} \log{(n)}\Big)^2. 
\end{equation}

Now, we state our main theorem and give the proof using the bounds and conjecture mentioned above. 

\begin{theorem}\label{maintheo}
 For a time $T$ of order $O(n(\log{(n)})^5)$ and $T' = O\left(\log{(1/\epsilon)}\right)$ iterations, the repeated continuous-time quantum walk on the graph $\Gamma(D_{2n}, S)$ with $S = \{a, a^{-1}, b\}$ converges to the uniform distribution when $n$ is odd.
\end{theorem}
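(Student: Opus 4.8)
The plan is to assemble the main theorem from the four ingredients already laid out in the excerpt: (i) the quantum mixing bound from Ref.~\cite{chakraborty2020fast} in Eq.~\eqref{qbound}, (ii) the decomposition of the eigenvalue-gap sum in Eq.~\eqref{eqsum}, (iii) the explicit bounds on $Su_1, Su_2, Su_4$ in Eq.~\eqref{sumbound} together with the conjectured bound on $Su_3$, and the bounds on the two intra-block sums in Eq.~\eqref{S5}, and (iv) Proposition~\ref{pro1} together with Theorem~\ref{uniformd} to convert the $1$-norm bound on $\bar P_T$ into uniform sampling after $T' = O(\log(1/\epsilon))$ iterations.

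First I would substitute the overlap identity $|\braket{x_j\text{ or }y_j}{p}|\cdot|\braket{p}{x_k\text{ or }y_k}| = 1/2n$ into Eq.~\eqref{qbound}, reducing the task to bounding $\frac{1}{nT}\sum_{\lambda_j\neq\lambda_k}\frac{1}{|\lambda_j-\lambda_k|}$. Then I would invoke Eq.~\eqref{eqsum} to write this sum as $8\sum_{C_1}\sum_{C_2} + 4\sum_{C_1}\sum_{C_1} + 4\sum_{C_2}\sum_{C_2}$, and split the cross term via $Su_1+Su_2+Su_3+Su_4$. Collecting the bounds: $Su_1 = O(n^2\log n)$, $Su_2 = O(n^2)$, $Su_4 = O(n^2\log n)$, $Su_3 \le 100\,n^2(\log n)^5$ by Conjecture~\ref{conj}, and each intra-block sum is $O((n\log n)^2) = O(n^2(\log n)^2)$. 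The dominant contribution is $Su_3$, so $\sum_{\lambda_j\neq\lambda_k}\frac{1}{|\lambda_j-\lambda_k|} = O(n^2(\log n)^5)$. Dividing by $nT$, the right-hand side of Eq.~\eqref{qbound} is $O\!\big(\frac{n(\log n)^5}{T}\big)$; choosing $T$ of order $n(\log n)^5$ with a large enough constant makes $\|\bar P_T - \Pi\|_1 \le \frac{1}{2e}$, which is exactly Eq.~\eqref{boundlim}.

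To finish, I would note that Theorem~\ref{uniformd} (applied with $P = \bar A$, $N = 2n$) guarantees that each $P_t$ — and hence $\bar P_T$ — is symmetric with uniform stationary distribution, and that $\Pi$ is ergodic with entries bounded below by $1/(2n)^2$. Feeding Eq.~\eqref{boundlim} into Proposition~\ref{pro1} (equivalently the submultiplicativity of $d(\cdot)$ in Eq.~\eqref{submu}), running the quantum-generated chain $\bar P_T$ for $T' = O(\log(1/\epsilon))$ rounds yields $\|\bar P_T^{\,T'} - \pi 1^\dagger\|_1 \le \epsilon$ with $\pi$ uniform. Since the walk on $\Gamma(D_{2n},S)$ is identified via the isomorphism $\phi$ with the walk on $\mathrm{SC}(\mathbb{Z}_n;R,M,T)$, this establishes convergence to the uniform distribution on $D_{2n}$ in total time $O(n(\log n)^5\log(1/\epsilon))$, as claimed.

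\emph{Main obstacle.} The one genuinely non-routine step is the bound on $Su_3$: unlike $Su_1, Su_2, Su_4$, the pairs $(j,k)$ with $j$ near $n/4$ and $k$ small make $\cos(2\pi j/n) - \cos(2\pi k/n) + 1$ come close to zero, so the summand can be large and a naive term-by-term estimate fails. The approach here is not a clean inequality but the reduction to the explicit function $f(n) = \sum_{b=0}^{p-1} f_\alpha(b)$ in Conjecture~\ref{conj}, obtained by isolating, for each "row" $b$, the critical index $N(\alpha)$ where the denominator is smallest and integrating (bounding sums by integrals) the $1/(\cos(\cdot)-\cos(\cdot)+1)$ profile away from that index; the residual claim $f(n) \le 100\,n^2(\log n)^5$ is then supported numerically rather than proved. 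I would flag this dependence on Conjecture~\ref{conj} explicitly in the statement of the proof, and separately handle the parity cases $n = 4p+1$ and $n = 4p+3$ (the latter deferred to the supplementary material), since the index ranges $C_1 = [0,\frac{n-1}{2}]$ and the floor/ceiling cutoffs at $n/4$ behave slightly differently in the two cases.
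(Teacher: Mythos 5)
Your proposal follows essentially the same route as the paper's proof: substitute the overlap identity into Eq.~\eqref{qbound}, combine the bounds of Eq.~\eqref{sumbound}, Eq.~\eqref{S5}, and Conjecture~\ref{conj} (with $Su_3$ dominating) to get $\|\bar P_T - \Pi\|_1 \le 1/2e$ for $T = O(n(\log n)^5)$ with an explicit constant, and then invoke Theorem~\ref{uniformd} and the $(1/2e)^{T'}\le\epsilon$ repetition argument to obtain $T' = O(\log(1/\epsilon))$. Your explicit flagging of the conditional dependence on Conjecture~\ref{conj} and of the $n=4p+1$ versus $n=4p+3$ parity split matches the paper's own caveats, so the attempt is correct and not materially different.
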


\begin{proof}
We combine the bounds given in Eq.~\eqref{sumbound},~\eqref{S5}, and Conjecture~\ref{conj} to give the mixing time bound as follows.   

\begin{align*}
     \parallel \Bar{P}_T - \Pi \parallel_1 &\leq 
     \frac{1}{nT} \Big(3n^2 \log{(n)} + \frac{3}{4} n^2+\frac{3}{4} n^2\\
    & + 800 n^2 (\log{(n)})^5+\frac{256}{\pi} (n \log{(n)})^2 \\
    &+ \frac{256}{\pi} (n \log{(n)})^2 \Big)\\
     &\leq \frac{1}{T} \Big(3 n\log{(n)} +2n +800 n (\log{(n)})^5  \\
     &+163 n (\log{(n)})^2  \Big).
\end{align*}

For $T =  4800 n(\log{(n)})^5$, 

\begin{align*}
 \parallel \Bar{P}_T - \Pi \parallel_ 1 &\leq \Big(\frac{1}{1600 (\log{(n)})^4} +\frac{1}{2400 (\log{(n)})^5} + \frac{1}{6}\\
 &+\frac{163}{4800(\log{(n)})^3}   \Big)\\
 & \leq \Big(\frac{1}{6} + \frac{1}{10(\log{(n)})^3} \Big).
\end{align*}
For all $n \geq 100$, $\parallel \Bar{P}_T - \Pi \parallel_1 \leq 1/2e$. According to Theorem~\ref{uniformd}, the matrix $\Bar{P}_T$ admits a uniform stationary distribution $\pi$. It is apparent that $\pi$ acts as an eigenvector of $\Bar{P}_T$, corresponding to an eigenvalue of 1. This leads to the representation of $\Bar{P}_T$ as $\Bar{P}_T = \ketbra{\pi}{\pi} + \sum_{j= 2}^{2n^2} v_j \ketbra{v_j}{v_j}$, where $v_j$ are remaining eigenvalues, each associated with an eigenvector $\ket{v_j}$ and all being less than one. Consequently, achieving uniform sampling necessitates a sufficient number of repetitions $T'$ of $\Bar{P}_T$. To attain an $\epsilon$-closeness to $\pi$, it is required that $(1/2e)^{T'} \leq \epsilon$, which translates to $T' \geq \log{(1/\epsilon)}$.   
\end{proof}

\section{{\em S\MakeLowercase{ummary and outlook}}}
In this study, we focus on the quantum mixing time of Cayley graphs associated with $D_{2n}$. We present an upper bound for the mixing time of a continuous-time quantum walk with repeated measurements on $D_{2n}$. Our results show that within $O(n (\log{(n)})^5)$ time, the quantum walk approaches the limiting distribution. By performing $O(\log{(1/\epsilon)})$ iterations, we achieve uniform sampling, surpassing the classical lower bound of $\Omega(n^2 \log{(1/2\epsilon)})$.

Additionally, we put forward a conjecture that relates to the sum of a subset of the reciprocals of eigenvalue gaps. This conjecture is supported by numerical evidence. Moreover, we highlight the quadratic advantage offered for the classical shuffling problem with the $D_{2n}$ group. This study raises the question of the potential advantages of quantum walks on finite groups in general.

Overall, our work expands the range of classical Markov chain Monte Carlo processes in which quantum walks with repeated measurements exhibit a speedup advantage. This study encourages further investigation into potential applications, especially sampling algorithms. Also, testing graph isomorphism is a hard problem in general, with applications in chemistry, network analysis, and computer vision. Quantum walks on Cayley graphs can construct canonical forms of graphs, which are unique representations that can be compared efficiently with a speed faster than that of a random walk. 

 \noindent {\bf {\em Acknowledgements.---}} 
We want to express our gratitude to Pranab Sen (Tata Institute of Fundamental Research, Mumbai) and Upendra Kapshikar (CQT) for giving us valuable insights on the initial part of the work. We also thank Ganesh Kadu and Hemant Bhate (SPPU) for their valuable discussions and suggestions on the finite groups. We would also like to thank Xiaolong \textcolor{blue}{Zhu} for the discussions.

This work was supported by the Key-Area Research and Development Program of Guang-Dong Province (Grant No. 2018B030326001), the National Natural Science Foundation of China (U1801661), Shenzhen Science and Technology Program (KQTD20200820113010023).
 
\nocite{*}
\bibliographystyle{apsrev4-1}
\bibliography{apssamp}

\onecolumngrid

\newpage
\appendix

\section{S\MakeLowercase{upplementary material for ``Quantum walks advantage on the dihedral group for uniform sampling problem".}}

\subsection{{\em Q\MakeLowercase{uantum walk algorithm}}}\label{qalgo}

We study a quantum walk on the graph $\Gamma(D_{2n}, S)$, where $S = \{a, a^{-1}, b\}$. The graph is 3-regular and has a vertex set $V = \{1, a, a^2, \dots, a^{n-1}, b, ba, ba^2, \dots, ba^{n-1}\}$. The edge set $E$ consists of pairs $\{g, h\}$ for all $g,h \in D_{2n}$ such that $gh^{-1} \in S$. We define the normalized adjacency matrix $\Bar{A}$ of $\Gamma$ such that $\Bar{A}(i, j) = 1/3$ if vertex $i$ is adjacent to vertex $j$ in $\Gamma$, and $0$ otherwise. The continuous-time quantum walk operator for a given time $t$ is denoted as $U(t)$ and is defined as $\text{e}^{\text{i} \Bar{A} t}$. The quantum walk algorithm starts from an initial state $\ket{x_0} = \ket{p}$, where $p$ is a vertex from the vertex set $V$. The algorithm then performs $T T'$ steps as specified.

\bigskip
\noindent\hrule
\begin{center}
\textbf{Algorithm 1}
\end{center}
\noindent\hrule
\begin{itemize}
    \item Quantum walk algorithm $(p, T',T)$
    \begin{enumerate}
        \item $r = 0$; $\ket{x_0}=\ket{p}$;
        \item While $(T' \geq r)$ 
        \begin{itemize}
            \item Perform the quantum walk starting with $\ket{x_r}$ for time $t$ chosen uniformly at random from $[0,T]$;
            
        \item Let $\ket{\psi_{r+1}}= \text{e}^{\text{i}\Bar{A}t} \ket{x_r}$;
        \item Measure $\ket{\psi_{r+1}}$ in the position basis and obtain the state $\ket{x_{r+1}}$;
        \item $r = r+1$ ;
        \end{itemize}
    \item Output $\ket{x_r}$
    \end{enumerate}
\end{itemize}
\hrule
\bigskip
\noindent 

\subsection{{\em B\MakeLowercase{ounds for the cases in} T\MakeLowercase{heorem~\ref{maintheo}}}}\label{ap2}

\textit{ In this section, we give a detailed analysis of bounds on the sum of the inverse of eigenvalue gaps used in Theorem~\ref{maintheo}.}

\[\textbf{Case 1: $j \in [0,\lfloor{n/4}\rfloor]$ and $ k \in [0,\lfloor{n/4}\rfloor]$ }.\]

In the given range, the following inequalities hold. 

\begin{equation}
    \cos{\Big(\frac{2 \pi j}{n}\Big)} \geq -\frac{2}{\pi}\Big(\frac{2 \pi j}{n}\Big) + 1,
\end{equation}

and 
\begin{equation}
  -\cos{\Big(\frac{2 \pi k}{n}\Big)}\geq -1. 
\end{equation}

This implies 

\begin{align}
 \Big|\cos{\Big(\frac{2 \pi j}{n}\Big)} -  \cos{\Big(\frac{2 \pi k}{n}\Big)} + 1 \Big| & \geq \Big|-\frac{2}{\pi}\Big(\frac{2 \pi j}{n}\Big) + 1 \Big|  \\
 & = \Big| -\frac{4j}{n} + 1 \Big|\\
 & \geq \Big| \frac{n-4j}{n}\Big| .
\end{align}

Hence, the bound on the required sum is

\begin{equation}
 \frac{3}{2}\sum_{j = 0}^{\lfloor\frac{n}{4}\rfloor} \sum_{k = 0}^{\lfloor\frac{n}{4}\rfloor} \frac{1}{|\cos{(2\pi j/n)} - \cos{(2\pi k)/n)}+1 |} \leq  \frac{3}{2}\sum_{j = 0}^{\lfloor\frac{n}{4}\rfloor} \sum_{k = 0}^{\lfloor\frac{n}{4}\rfloor} \frac{n}{|n-4j|}, 
\end{equation}

which is less than 
\begin{equation}
    \begin{aligned}
       \sum_{j = 0}^{\lfloor\frac{n}{4}\rfloor} \frac{1}{|n-4j|} &\leq \frac{1}{n-4\lfloor \frac{n}{4} \rfloor} + \int_{0}^{\lfloor \frac{n}{4} \rfloor} \frac{dj}{n-4j} \\
       &= \frac{1}{4} \Big[\ln{(n)}-\ln{\Big(n-4\Big\lfloor \frac{n}{4} \Big\rfloor\Big)} + \frac{1}{\frac{n}{4}-\lfloor \frac{n}{4} \rfloor} \Big]\\
       &\leq  \ln{(n)}.
    \end{aligned}
\end{equation}
So
\begin{align}
 \frac{3}{2}\sum_{j = 0}^{\lfloor\frac{n}{4}\rfloor} \sum_{k = 0}^{\lfloor\frac{n}{4}\rfloor} \frac{1}{|\cos{(2\pi j/n)} - \cos{(2\pi k)/n)}+1 |} \leq \frac{3}{8} n^2 \log{(n)}. 
\end{align}

\[\textbf{Case 2: $j \in [0,\lfloor{n/4}\rfloor]$ and $ k \in [\lceil{n/4\rceil}, (n-1)/2]$ }.\]

On a similar line, we give the following inequalities to bound the sum in this range.

\begin{equation}
    \cos{\Big(\frac{2 \pi j}{n}\Big)} \geq -\frac{2}{\pi}\Big(\frac{2 \pi j}{n}\Big) + 1,
\end{equation}
and
\begin{equation}
    -\cos{\Big(\frac{2 \pi k}{n}\Big)} \geq \frac{2}{\pi}\Big(\frac{2 \pi k}{n}\Big) - 1,
\end{equation}

\begin{align}
 \Big|\cos{\Big(\frac{2 \pi j}{n}\Big)} -  \cos{\Big(\frac{2 \pi k}{n}\Big)} + 1 \Big| & \geq \Big| \frac{4(k-j)}{n} + 1 \Big|.  
\end{align}

\begin{equation}
 \frac{3}{2}\sum_{j = 0}^{\lfloor\frac{n}{4}\rfloor} \sum_{k = \lceil\frac{n}{4}\rceil}^{(n-1)/2} \frac{1}{|\cos{(2\pi j/n)} - \cos{(2\pi k)/n)}+1 |} \leq  \frac{3}{2}\sum_{j = 0}^{\lfloor\frac{n}{4}\rfloor} \sum_{k = \lceil\frac{n}{4}\rceil}^{(n-1)/2} \frac{n}{|n + 4(k-j)|}. 
\end{equation}

\begin{equation}
    \frac{3}{2}\sum_{j = 0}^{\lfloor\frac{n}{4}\rfloor} \sum_{k = \lceil\frac{n}{4}\rceil}^{(n-1)/2} \frac{n}{|n + 4(k-j)|}  \leq \frac{3n^2}{32} \left(\frac{n}{n+4}\right) \leq \frac{3n^2}{32}.
\end{equation}

\[\textbf{Case 3: $j \in [\lceil{n/4\rceil}, (n-1)/2]$ and $ k \in [0,\lfloor{n/4\rfloor}]$ }.\]

The proof of this case is given numerically in the Supplementary material C.

\[\textbf{Case 4: $j \in [\lceil{n/4\rceil}, (n-1)/2]$ and $ k \in [\lceil{n/4\rceil}, (n-1)/2]$ }.\]

We provide the following argument in this range.

\begin{equation}
    \cos{\Big(\frac{2 \pi j}{n}\Big)} + 1 \geq 0,
\end{equation}
and 
\begin{equation}
    -\cos{\Big(\frac{2 \pi k}{n}\Big)} \geq \frac{2}{\pi}\Big(\frac{2 \pi k}{n}\Big) - 1.
\end{equation}

This gives 

\begin{align}
 \Big|\cos{\Big(\frac{2 \pi j}{n}\Big)} -  \cos{\Big(\frac{2 \pi k}{n}\Big)} + 1 \Big| & \geq \Big| \frac{2}{\pi}\Big(\frac{2 \pi k}{n}\Big) - 1 \Big|.  
\end{align}

The sum is bounded as follows:

\begin{align}
     \frac{3}{2}\sum_{j = \lceil\frac{n}{4}\rceil}^{(n-1)/2} \sum_{k = \lceil\frac{n}{4}\rceil}^{(n-1)/2} \frac{1}{|\cos{(2\pi j/n)} - \cos{(2\pi k)/n)}+1 |} &\leq \frac{3}{2}\sum_{j = \lceil\frac{n}{4}\rceil}^{(n-1)/2} \sum_{k = \lceil\frac{n}{4}\rceil}^{(n-1)/2} \left|\frac{n}{n-4k}\right|\\
     & \leq \frac{3}{32} n^2 \log{(n)}.
\end{align}

\[\textbf{Case 5: $j \in [0, (n-1)/2]$ and $ k \in [0, (n-1)/2]$ $\lambda_j \neq \lambda_k$ }.\]

We try to bound the following sum

\begin{equation}
    \sum_{j \in C_1} \sum_{k \in C_1,\lambda_j \neq \lambda_k} \frac{1}{|\lambda_j - \lambda_k|}.
\end{equation}

It can be written as 

\begin{align}
  \sum_{j \in C_1} \sum_{k \in C_1,\lambda_j \neq \lambda_k} \frac{1}{|\lambda_j - \lambda_k|} & = \sum_{j = 0}^{(n-1)/2} \sum_{k = 0, j \neq k}^{(n-1)/2} \frac{1}{ |\cos{(2\pi j/n)} - \cos{(2\pi k)/n)}|}\\
  & \leq 2 \sum_{j = 0}^{(n-1)/2} \sum_{ k = 0, k > j}^{(n-1)/2} \frac{1}{ |\cos{(2\pi j/n)} - \cos{(2\pi k)/n)}|}\\
  & \leq 2\sum_{y = 1}^{(n-1)} \sum_{ z = 1, k > j}^{(n-1)} \frac{1}{|\sin{(\pi y/n)}\sin{(\pi z/n)}|}.
\end{align}

Note that the map $(j,k) \mapsto (j+k, k-j)$ from 
$\{(j,k): 0 \leq j < k \leq (n-1)/2\}$ to
$\{1,2,\dots, n -1\}\times \{1,2, \dots, n-1\}$ (its inverse is
$(y,z) \mapsto ((y-z)/2, (y+z)/2)$). Now consider the sum

\begin{equation}
 \sum_{y = 1}^{n-1} \frac{1}{\sin{(\pi y/n)}} = 2\sum_{y = 1}^{(n-1)/2} \frac{1}{\sin{(\pi y/n)}}. 
\end{equation}

Note that $\theta \in (0, \pi/2) $ $\implies$ $ \frac{\theta}{2} \leq \sin{\theta}$ $\implies$ $ \frac{2}{\theta} \geq \frac{1}{\sin{\theta}}$ . So for $y \in [1, \frac{n-1}{2}] $ $\implies$ $\frac{\pi y}{n} \in [\frac{\pi}{n}, \pi/2).$ This implies 

\begin{align}
  2\sum_{y = 1}^{(n-1)/2} \frac{1}{\sin{(\pi y/n)}}  & \leq 4\sum_{y = 1}^{(n-1)/2} \frac{n}{\pi y} \\
  & \leq \frac{4n}{\pi} \Big[1 + \log{\Big(\frac{n-1}{2}\Big)} \Big] \\
  & \leq \frac{8n}{\pi} \log{\Big(\frac{n-1}{2}\Big)}.
\end{align}

Hence 

\begin{align}
    \sum_{y = 1}^{(n-1)} \sum_{ z = 1, k > j}^{(n-1)} \frac{1}{|\sin{(\pi y/n)}\sin{(\pi z/n)}|} & \leq
    \Big(\frac{8n}{\pi} \log{(n)}\Big)^2.
\end{align}

On the same line, 

\begin{equation}
    \sum_{j \in C_2} \sum_{k \in C_2,\lambda_j \neq \lambda_k} \frac{1}{|\lambda_j - \lambda_k|} \leq  \Big(\frac{8n}{\pi} \log{(n)}\Big)^2. 
\end{equation}

\subsection{{\em C\MakeLowercase{onjecture~\ref{conj}} }}\label{conjecture}

\textit{In this section, we propose a conjecture to give an upper bound on the sum $Su_{3}$. Subsequently, we provide a numerical argument to support the conjecture. }

\begin{conjecture}
   Consider $n = 4p+1$ type, where $p \in \mathbb{Z}_{+}$, let $\alpha = \arccos{\Big(1-\sin{\Big(\frac{2\pi}{n}(b+\frac{3}{4})\Big)} \Big)}$, and $N(\alpha) = \lfloor \frac{n}{2\pi} \alpha \rfloor$ for $b \in[0, p-1]$ then

   \begin{align}
    \sum_{k = 0}^{\lfloor \frac{n}{4}\rfloor} \sum_{j = \lceil 
\frac{n}{4} \rceil}^{\frac{n-1}{2}} \frac{1}{\Big|\cos{\Big(\frac{2\pi j}{n} \Big)}-\cos{\Big(\frac{2\pi k}{n} \Big)}+1\Big|} \leq f(n) = \sum_{b=0}^{p-1} f_{\alpha}(b) \leq 100 n^2 (\log{(n)})^5,  
   \end{align}

   where 

   \begin{equation}
    f_{\alpha}(b) = \frac{\pi}{2\alpha}\Bigg[\frac{1}{\alpha} + \frac{1}{\alpha-\frac{2\pi}{n}N(\alpha)} + \frac{1}{\frac{2\pi}{n}(N(\alpha)+1)-\alpha}\Bigg] + \frac{n}{4\alpha} \ln{\Bigg[\frac{\frac{\pi^2}{2}}{ (\alpha-\frac{2\pi}{n}N(\alpha)) (\frac{2\pi}{n}(N(\alpha)+1)-\alpha)} \Bigg]}.  
\end{equation}

Note that when $n = 4p+3$ then we have $\alpha = \arccos{\Big(1-\sin{\Big(\frac{2\pi}{n}(b+\frac{1}{4})\Big)} \Big)}$, and $N(\alpha) = \lfloor \frac{n}{2\pi} \alpha \rfloor$ for $b \in[0, p]$.
\end{conjecture}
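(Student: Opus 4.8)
The plan is to split the conjecture into its two assertions and prove them in sequence: first the analytic reduction $\Sigma:=\sum_{k=0}^{\lfloor n/4\rfloor}\sum_{j=\lceil n/4\rceil}^{(n-1)/2}|\cos(2\pi j/n)-\cos(2\pi k/n)+1|^{-1}\le f(n)=\sum_{b=0}^{p-1}f_\alpha(b)$, which I expect to be routine, and then the arithmetic bound $f(n)\le 100\,n^2(\log n)^5$, which is the genuine difficulty and the reason the statement is left as a conjecture.

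For the reduction I would fix the outer index and treat the denominator $g(j)=\cos(2\pi j/n)-\cos(2\pi k/n)+1$ as a function of a continuous variable $j$ on $[\lceil n/4\rceil,(n-1)/2]$, where $2\pi j/n\in[\pi/2,\pi)$ so that $g$ is strictly monotone with a single zero $j^\ast$ (the \emph{resonance}, where the denominator is nearly singular). Under the reparametrisation that introduces the shift $b+\tfrac34$, this zero sits at the angle $\alpha=\arccos(1-\sin(\tfrac{2\pi}{n}(b+\tfrac34)))\in[0,\pi/2]$, and $N(\alpha)=\lfloor\tfrac{n}{2\pi}\alpha\rfloor$ is the grid index immediately below it. First I would isolate the two lattice points straddling $j^\ast$; dividing their contributions by the local slope $|g'|$ and invoking a Jordan-type inequality $1/\sin\alpha\le\tfrac{\pi}{2\alpha}$ to convert the resulting factor into the prefactor produces exactly the bracketed terms $\tfrac{1}{\alpha-\frac{2\pi}{n}N(\alpha)}+\tfrac{1}{\frac{2\pi}{n}(N(\alpha)+1)-\alpha}$ together with the benign $\tfrac1\alpha$ piece. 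For the remaining $j$, monotonicity and convexity of cosine yield a two-sided estimate $|g(j)|\ge c\,\tfrac{2\pi}{n}\sin\alpha\,|j-j^\ast|$, so the tail is comparable to $\int\tfrac{dj}{|j-j^\ast|}$ and produces the logarithmic term $\tfrac{n}{4\alpha}\ln[\cdots]$. Assembling the straddling, smooth and tail contributions gives $f_\alpha(b)$ and hence $\Sigma\le f(n)$.

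The hard half is $f(n)\le 100\,n^2(\log n)^5$. The smooth and logarithmic pieces of $f_\alpha(b)$ are controlled by integral comparison: summing $\tfrac{n}{4\alpha}\ln[\cdots]$ over $b$, the only delicate region is $\alpha\to0$ (i.e.\ $b\to0$), where $\alpha\sim\sqrt{2u}$ with $u=\tfrac{2\pi}{n}(b+\tfrac34)$, so $\tfrac{n}{\alpha}\sim n^{3/2}/\sqrt b$ and $\sum_b n^{3/2}/\sqrt b=O(n^2)$, keeping this part within the target up to logarithms. The dangerous term is $\sum_{b}\tfrac{\pi}{2\alpha}\big[\tfrac{1}{\alpha-\frac{2\pi}{n}N(\alpha)}+\tfrac{1}{\frac{2\pi}{n}(N(\alpha)+1)-\alpha}\big]$, a sum of reciprocals of the distance from $\tfrac{n}{2\pi}\alpha(b)$ to its nearest integer. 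I would attack it with a dyadic decomposition over $\|\tfrac{n}{2\pi}\alpha(b)\|$: for each scale $2^{-m}$, bound each qualifying term by $O(n\,2^{m})$ and multiply by the number of indices $b$ landing in that band, then sum over the $O(\log n)$ relevant scales. The count comes from the derivative $\tfrac{d}{db}\big(\tfrac{n}{2\pi}\alpha\big)=\cos u/\sqrt{\sin u\,(2-\sin u)}$, which is bounded below on the bulk of the range, so the fractional parts $\{\tfrac{n}{2\pi}\alpha(b)\}$ are spaced $\gtrsim 1/n$ apart and each band holds $O(n\,2^{-m})$ indices.

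\textbf{Main obstacle.} The derivative degenerates as $u\to\pi/2$ (where $\cos u\to0$), i.e.\ the resonance \emph{stalls}, so many consecutive $b$ can cluster near one point; if that point sits anomalously close to an integer the reciprocal-distance terms pile up. Closing the proof therefore requires a uniform lower bound — of the form $\|\tfrac{n}{2\pi}\alpha(b)\|\gtrsim n^{-O(1)}$, or an amortised version summing to $n^2(\log n)^{O(1)}$ — on how closely $\arccos(1-\sin(\tfrac{2\pi}{n}(b+\tfrac34)))$ can approach a rational multiple of $\tfrac{2\pi}{n}$. This is a genuinely Diophantine statement whose truth depends on the arithmetic of $n$; a pointwise worst-case estimate is false, so one must extract cancellation across $b$, for instance via a three-distance/continued-fraction analysis of the sequence $\{\tfrac{n}{2\pi}\alpha(b)\}_b$. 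I expect exactly this counting step to resist an elementary argument, which is why only the polylogarithmic-loss bound survives and the authors fall back on numerical verification.
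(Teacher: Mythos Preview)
Your diagnosis of the overall structure is right: the statement is a conjecture precisely because the second inequality $f(n)\le 100\,n^2(\log n)^5$ resists proof, and the paper makes no attempt at one --- after deriving $f_\alpha(b)$ it says only ``Due to the complexity of $f_\alpha(b)$ we justify our argument with numerical results'' and plots $f(n)$ against $100\,n^2\log n$. So your discussion of the dyadic decomposition, the degenerating derivative near $u\to\pi/2$, and the Diophantine obstruction on $\|\tfrac{n}{2\pi}\alpha(b)\|$ is genuinely additional analysis that the paper does not contain; it is a plausible account of \emph{why} the bound is hard, but since the paper offers no proof there is nothing to compare it against.

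For the first inequality $\Sigma\le f(n)$, however, you have the roles of the two indices swapped. You fix the outer index $k$ and vary $j$, but the conjecture's $\alpha=\arccos\!\big(1-\sin(\tfrac{2\pi}{n}(b+\tfrac34))\big)$ depends on $b$, which is the reparametrised $j$. The paper instead fixes $b$ (i.e.\ $j$) and sums over $a$ (i.e.\ $k$): for fixed $b$ the denominator vanishes at $\tfrac{2\pi a}{n}=\alpha$, so $\alpha$ is the resonant \emph{$k$-angle}, not the resonant $j$-angle, and $N(\alpha)=\lfloor\tfrac{n}{2\pi}\alpha\rfloor$ is the $a$-grid point just below it. With your orientation the formulas for $f_\alpha(b)$ would not come out as stated.

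The paper's technical route for this reduction is also different from your linearise-at-the-zero-plus-Jordan argument. After the substitution $\cos x=\sin(x-\pi/2)$ and the shift $j\mapsto b+\tfrac34$, it applies the quadratic lower bound
\[
|(1-\cos\theta)-(1-\cos\theta')|\ \ge\ \tfrac{1}{\pi}\,|\theta^2-\theta'^2|,\qquad \theta,\theta'\in(0,\pi/2),
\]
(obtained from $\int_{\theta'}^{\theta}\sin t\,dt\ge\int_{\theta'}^{\theta}\tfrac{2}{\pi}t\,dt$), which turns the inner sum into $\pi\sum_a|\alpha^2-(\tfrac{2\pi a}{n})^2|^{-1}$. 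Partial fractions then produce the prefactor $\tfrac{1}{2\alpha}$ directly (no Jordan inequality needed), the sum over $a$ is split at $N(\alpha)$, the three boundary terms $a=0,\,N(\alpha),\,N(\alpha)+1$ are kept, and the rest is bounded by the obvious integrals, yielding exactly the bracketed and logarithmic pieces of $f_\alpha(b)$. Your sketch would likely reach an equivalent bound after correcting the index swap, but it is a different derivation from the one the paper records.
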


\textbf{Numerical argument:}

 Due to $\cos{(x)} = \sin{(x- \pi/2)}$, we have the following 
\begin{equation}
    \sum_{k = 0}^{\lfloor \frac{n}{4}\rfloor} \sum_{j = \lceil 
\frac{n}{4} \rceil}^{\frac{n-1}{2}} \frac{1}{|\cos{\Big(\frac{2\pi j}{n} \Big)}-\cos{\Big(\frac{2\pi k}{n} \Big)}+1|} =  \sum_{k = 0}^{\lfloor \frac{n}{4}\rfloor} \sum_{j = \lceil  \frac{n}{4} \rceil}^{\frac{n-1}{2}} \frac{1}{\Big|\sin{\Big(\frac{2\pi j}{n}-\frac{\pi}{2} \Big)}+\cos{\Big(\frac{2\pi k}{n} \Big)}-1\Big|}.
\end{equation}

For $n=4p+1$, when $j = \lceil \frac{n}{4} \rceil = p + 1$  then $\frac{2\pi}{n}(p+1)-\frac{\pi}{2} = \frac{2\pi}{n}((n-1)/4+1)-\frac{\pi}{2} = \frac{2\pi}{n} \frac{3}{4}$. Subsequently, $\sin{\Big(\frac{2\pi j}{n}-\frac{\pi}{2} \Big)} = \sin{\Big(\frac{2\pi}{n} (b + \frac{3}{4}) \Big)}$ where $b \in [0,p-1]$. Similarly, we change the $k$ range to $a \in [0,p]$. The above sum can then be written in terms of $b$

\begin{equation}
         \sum_{k = 0}^{\lfloor \frac{n}{4}\rfloor} \sum_{j = \lceil  \frac{n}{4} \rceil}^{\frac{n-1}{2}} \frac{1}{|\sin{\Big(\frac{2\pi j}{n}-\frac{\pi}{2} \Big)}+\cos{\Big(\frac{2\pi k}{n} \Big)}-1|} =  \sum_{a = 0}^{p} \sum_{b = 0}^{p-1} \frac{1}{|\sin{\Big(\frac{2\pi}{n} (b+\frac{3}{4}) \Big)}+\cos{\Big(\frac{2\pi a}{n} \Big)}-1|}.
\end{equation}

Note that for any $\theta, \theta' \in ( 0, \pi/2)$,

\begin{equation}
    |1-\cos(\theta)-(1-\cos(\theta'))| \geq \frac{1}{\pi}|\theta^2-\theta'^2|.
\end{equation}

For $\alpha = \arccos{\Big(1-\sin{\frac{2\pi}{n}(b+\frac{3}{4})} \Big)}$,  and $N(\alpha) = \lfloor \frac{n}{2\pi} \alpha \rfloor$
the sum is 

\begin{equation}
    \begin{aligned}
       & \sum_{a = 0}^{p} \sum_{b = 0}^{p-1} \frac{1}{\Big|\sin{\Big(\frac{2\pi}{n} (b+\frac{3}{4}) \Big)}+\cos{\Big(\frac{2\pi a}{n} \Big)}-1\Big|}  \leq \pi \sum_{a = 0}^{p} \sum_{b = 0}^{p-1} \frac{1}{|\alpha^2-(\frac{2\pi}{n} a)^2|} \\
        & = \frac{\pi}{2\alpha} \sum_{b=0}^{p-1} \Bigg[\sum_{a = 0}^{N(\alpha)} \Bigg(\frac{1}{\alpha + \frac{2\pi a}{n}} + \frac{1}{\alpha - \frac{2\pi a}{n}} \Bigg) + \sum_{a = N(\alpha+1)}^{p} \Bigg(\frac{1}{\frac{2\pi a}{n}-\alpha} - \frac{1}{\alpha +\frac{2\pi a}{n}} \Bigg) \Bigg] \\
        & \leq \frac{\pi}{2\alpha} \sum_{b=0}^{p-1} \Bigg[\frac{1}{\alpha} + \int_{0}^{N(\alpha)} \frac{\text{d}a}{\alpha + \frac{2\pi}{n}a}  + \frac{1}{\alpha - \frac{2\pi}{n}N(\alpha)} + \int_{0}^{N(\alpha)} \frac{\text{d}a}{\alpha-\frac{2\pi}{n}a} + \frac{1}{\frac{2\pi}{n}(N(\alpha)+1) - \alpha} + \int_{N(\alpha)+1}^{p} \frac{\text{d}a}{\frac{2\pi a}{n}-\alpha}\Bigg]\\
        & = \frac{\pi}{2\alpha} \sum_{b= 0}^{p-1} \Bigg\{ \frac{1}{\alpha} + \frac{1}{\alpha-\frac{2\pi}{n}N(\alpha)} + \frac{1}{\frac{2\pi}{n}(N(\alpha)+1)-\alpha} + \frac{n}{2\pi} \ln\Bigg(\frac{(\alpha+\frac{2\pi}{n})(\frac{2\pi}{n}p-\alpha)}{\big(\alpha-\frac{2\pi}{n}N(\alpha)\big) \big(\frac{2\pi}{n}(N(\alpha)+1)-\alpha \big)} \Bigg) \Bigg\} \\
        & \leq \sum_{b=0}^{p-1} \Bigg\{\frac{\pi}{2\alpha}\Bigg[\frac{1}{\alpha} + \frac{1}{\alpha-\frac{2\pi}{n}N(\alpha)} + \frac{1}{\frac{2\pi}{n}(N(\alpha)+1)-\alpha}\Bigg] + \frac{n}{4\alpha} \ln{\Bigg(\frac{\frac{\pi^2}{2}}{ \big(\alpha-\frac{2\pi}{n}N(\alpha)\big) \big(\frac{2\pi}{n}(N(\alpha)+1)-\alpha \big)} \Bigg)}  \Bigg\}.
    \end{aligned}
\end{equation}

We define the function $f_{\alpha}(b)$, which is given as

\begin{equation}
    f_{\alpha}(b) = \frac{\pi}{2\alpha}\Bigg[\frac{1}{\alpha} + \frac{1}{\alpha-\frac{2\pi}{n}N(\alpha)} + \frac{1}{\frac{2\pi}{n}(N(\alpha)+1)-\alpha}\Bigg] + \frac{n}{4\alpha} \ln{\Bigg(\frac{\frac{\pi^2}{2}}{ \big(\alpha-\frac{2\pi}{n}N(\alpha)\big) \big(\frac{2\pi}{n}(N(\alpha)+1)-\alpha \big)} \Bigg)},  
\end{equation}
and 
\begin{equation}
    f(n) = \sum_{b=0}^{p-1} f_{\alpha}(b).
\end{equation}

Due to the complexity of $f_{\alpha}(b)$ we justify our argument with numerical results.  We plot $f(n)$ (refer Fig.\ref{fig:4p+1}) and show that it is upper bounded by $100 n^2 (\log{(n)})$. 
\begin{figure}[ht!]
    \centering
    \includegraphics[width=0.65\textwidth,keepaspectratio]{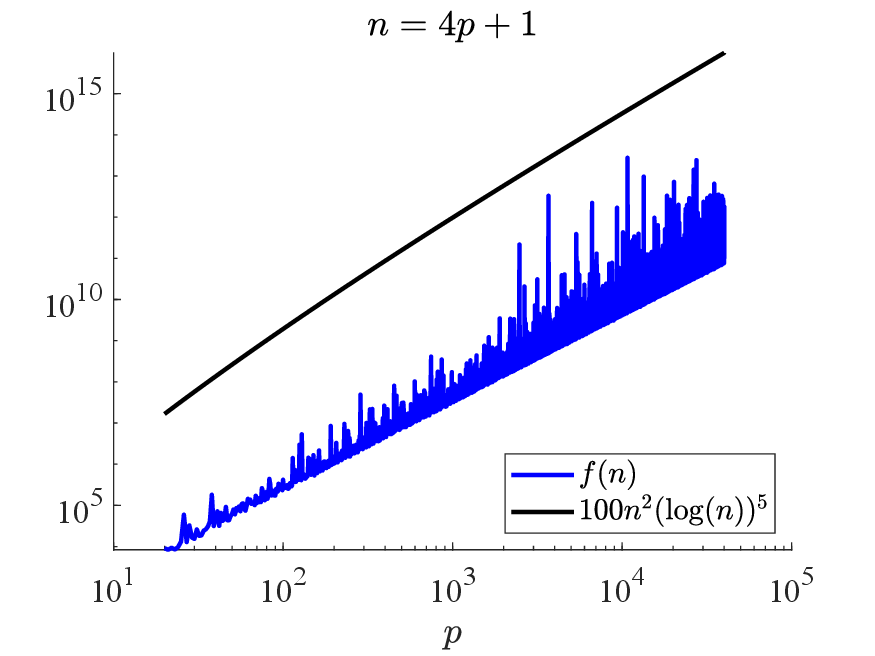} \caption{{{\bf $f(n)$ bound for $n = 4p+1$.} In this plot we show that $f(n)$ is bounded above by $O(n^2 (\log{(n)}))$. Here $x-$ axis scales as $p$, and the $y-$ axis is scaled logarithmically to plot it. We can see that \textcolor{blue}{$y = f(n)$} } is bounded above by  $100 n^2 (\log{(n)})$. }
    \label{fig:4p+1}
\end{figure}
On the same line, we do it for $n = 4p+3$ depicted in FIG.\ref{fig:4p3}. 

\begin{figure}[ht!]
    \centering
    \includegraphics[width=0.65\textwidth,keepaspectratio]{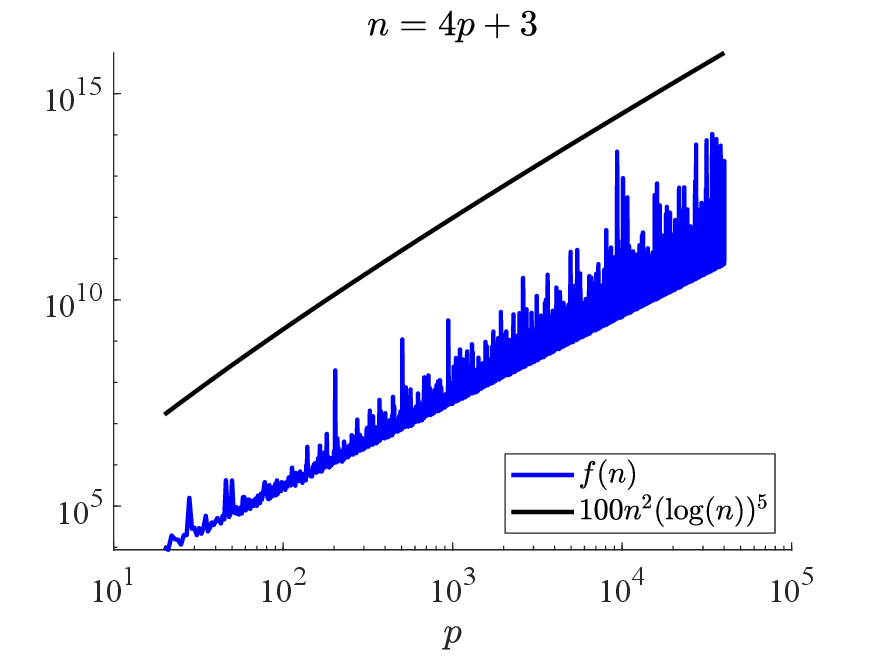} \caption{{{\bf $f(n)$ bound for $n = 4p+3$.} In this plot we show that $f(n)$ is bounded above by $O(n^2 (\log{(n)}))$. Here $x-$ axis scales as $p$, and the $y-$ axis is scaled logarithmically to plot it. We can see that \textcolor{blue}{$y = f(n)$} } is bounded above by  $100 n^2 (\log{(n)})$. }
    \label{fig:4p3}
\end{figure}

\newpage

\subsection{{\em Q\MakeLowercase{uantum walk limiting distribution on $\Gamma(D_{2n}, S)$} }}\label{limitinga}

\textit{In this section, we compute the limiting distribution $\Pi$ of the continuous time quantum walk on the graph $\Gamma(D_{2n}, S)$. }

The probability to go from a vertex $p$ to $q$ on $\Gamma(D_{2n}, S)$ is given Eq.~\eqref{Eq:prob} as follows:
\begin{equation*}
        P_{t}(p,q)=\left|\frac{1}{2n}\bra{q}\sum_{j=0}^{n-1}\text{e}^{\text{i}t(2\cos{(2\pi j/n)}+1)/3}X_j\right.+\sum_{j=n}^{2n-1}\text{e}^{\text{i}t(2\cos{(2\pi (j-n)/n)}-1)/3}Y_j\ket{p}\Bigg|^2. 
\end{equation*}
We change the indexing in the second sum in the above equation from $j$ to $j-n$. This gives,
\begin{equation}\label{Eqn:prob}
    P_{t}(p,q) = \left|\frac{1}{2n}\bra{q}\sum_{j = 0}^{n-1} \text{e}^{\text{i}t(2\cos{(2\pi j/n)+1)/3}} X_j +\sum_{j = 0}^{n-1} \text{e}^{\text{i}t(2\cos{(2\pi j/n)-1)/3}} Y_{j} \right|.
\end{equation}
We now compute the $(p,q)$ entry of time-averaged probability matrix $\Bar{P}_{T}$. It is given by Eq.~\eqref{time-avg}.

First observe that the action of $X_j$ or $Y_j$ on $\ket{p}$ is given as
\begin{equation*}
    (X_j)\ket{p} = \Bar{\omega}^{(p-1)j}\begin{bmatrix}v_j\\v_j\end{bmatrix}= (Y_j)\ket{p}.
\end{equation*}
If we apply $\bra{q}$ on the above state we will get
\begin{equation*}
    \bra{q}(X_j)\ket{p} = \Bar{\omega}^{(p-1)j}\omega^{(q-1)j} = \bra{q}(Y_j)\ket{p} .
\end{equation*}
Now using Eq.~\eqref{time-avg} and \eqref{Eqn:prob} we have,
\begin{equation}
    \begin{aligned}
    \Bar{P}_{T}(p,q) &=\frac{1}{4n^2T}\int_0^T\left|\sum_{j=0}^{n-1}\text{e}^{\text{i}t(2\cos{(2\pi j/n)}+1)/3}w^{(q-p)j}+\sum_{j=0}^{n-1}\text{e}^{\text{i}t(2\cos{(2\pi j/n)}-1)/3}w^{(q-p)j}\right|^2 \text{d}t \\
    &=\frac{1}{4n^{2}T}\int_{0}^{T}\Big|\left(\text{e}^{\text{i}t/3}+\text{e}^{-\text{i}t/3}\right)\sum_{j=0}^{n-1}\text{e}^{\text{i}t(2\cos{(2\pi j/n)})/3}w^{(q-p)j}\Big|^{2} \text{d}t \\
    &=\frac{1}{4n^{2}T}\int_{0}^{T}\left|2\cos{(t/3)}\sum_{j=0}^{n-1}\text{e}^{\text{i}t(2\cos{(2\pi j/n)})/3}w^{(q-p)j}\right|^{2}dt\\
    &=\frac{1}{n^{2}T}\int_{0}^{T} \left(\frac{1+\cos{(2t/3)}}{2}\Big|\sum_{j=0}^{n-1}\text{e}^{\text{i}t(2\cos{(2\pi j/n)})/3}w^{(q-p)j}\Big|^{2}\right)\text{d}t\\
     &=\frac{1}{2n^{2}T}\left(\int_{0}^{T}\Big|\sum_{j=0}^{n-1}\text{e}^{\text{i}t(2\cos{(2\pi j/n)})/3}w^{(q-p)j}\Big|^{2}\text{d}t+\int_{0}^{T}\cos{(2t/3)} \Big|\sum_{j=0}^{n-1}\text{e}^{\text{i}t(2\cos{(2\pi j/n)})/3}w^{(q-p)j}\Big|^{2}\text{d}t\right). \\
     \end{aligned}
    \end{equation}
    
   We have bound on the modulus of the second term because
    
    \begin{equation}
        \frac{1}{2n^{2}T}\left|\int_{0}^{T}\cos{(2t/3)} \Big|\sum_{j=0}^{n-1}\text{e}^{\text{i}t(2\cos{(2\pi j/n)})/3}w^{(q-p)j}\Big|^{2}dt\right|\leq\frac{3}{4n^{2}T} |\sin(2T/3)|(n)^2\leq\frac{3}{4n^{2}T}(n)^2.
    \end{equation}
    
    Since for a complex number $z$, $|z|\rightarrow 0 \text{ iff } z \rightarrow 0$, the second term itself goes to zero as $T$ goes to infinity giving us, 
    \begin{equation}
    \lim_{T \rightarrow \infty}\Bar{P}_{T}(p,q)=\lim_{T \rightarrow \infty} \frac{1}{2n^{2}T}\int_{0}^{T}\Big|\sum_{j=0}^{n-1}\text{e}^{\text{i}t(2\cos{(2\pi j/n)})/3}w^{(q-p)j}\Big|^{2}\text{d}t.
    \end{equation}
   
    For a complex number $z$ we have $|z|^2=z\Bar{z}.$ We use it for $\Big|\sum_{j=0}^{n-1}\text{e}^{\text{i}t(2\cos{(2\pi j/n)})/3}w^{(q-p)j}\Big|^{2}$ and get the following,
    
   \begin{equation}\label{eq:lim prob}
    \begin{aligned} 
    \lim_{T \rightarrow \infty}\Bar{P}_{T}(p,q)& =\lim_{T \rightarrow \infty} \frac{1}{2n^{2}T}\int_{0}^{T}\left(\sum_{j,k=0}^{n-1}\text{e}^{\text{i}t(2\cos{(2\pi j/n)}-2\cos{(2\pi k/n)})/3}w^{(q-p)(j-k)}\right)\text{d}t\\
    &= \lim_{T \rightarrow \infty}\frac{1}{2n^{2}T}\int_{0}^{T}\left(n + \sum_{j,k=0, j \neq k}^{n-1}\text{e}^{\text{i}t(2\cos\left(2\pi j/n\right)-2\cos\left(2\pi k/n\right))/3}w^{(q-p)(j-k)}\right)\text{d}t\\
    &=\frac{1}{2n}+\lim_{T \rightarrow \infty}\frac{1}{2n^{2}T}\int_{0}^{T}\left(\sum_{j,k=0, j \neq k}^{n-1}\text{e}^{\text{i}t(2\cos\left(2\pi j/n\right)-2\cos\left(2\pi k/n\right))/3}w^{(q-p)(j-k)}\right)\text{d}t.
    \end{aligned}
    \end{equation}

    We now make the following cases: 

\[\textbf{Case 1: $p = q \text{ or } q-p = n$ }.\]

    In this case since $w^{(q-p)}=1$, Eq.~\eqref{eq:lim prob} becomes,
    \begin{equation}
    \lim_{T \rightarrow \infty}\Bar{P}_{T}(p,q)=\frac{1}{2n}+\lim_{T \rightarrow \infty}\frac{1}{2n^2T} \int_0^T  \left(\sum_{j,k=0,j\neq k}^{n-1}\text{e}^{\text{i}t(-4\sin{(2\pi (j-k)/n)}\sin{(2\pi (j+k)/n)})/3}\right)\text{d}t.
   \end{equation}

   Now for each $j \geq 1$, there exist $k$ such that $j+k = n$. We separate these terms and get,

    \begin{equation}
    \lim_{T \rightarrow \infty}\Bar{P}_{T}(p,q)=\frac{1}{2n}+  \frac{n-1}{2n^2} + \lim_{T \rightarrow \infty}\frac{1}{2n^2T} \int_0^T  \left(\sum_{j,k=0,j\neq k, j+k \neq n }^{n-1}\text{e}^{\text{i}t(-4\sin{(2\pi (j-k)/n)}\sin{(2\pi (j+k)/n)})/3}\right)\text{d}t.
    \end{equation}

    Finally, since the third term vanishes after evaluating the integral we have,
    \begin{equation}
    \lim_{T \rightarrow \infty}  \Bar{P}_{T}(p,q)  = \frac{1}{2n} +\frac{n-1}{2n^2}.
    \end{equation}

    \[\textbf{Case 2: $p \neq q$ and $q-p\neq n$ }.\]

    Observe that, in the second term of Eq. \eqref{eq:lim prob} for each $j$, $0\leq j \leq n-1$ there exist $k$, $0\leq k \leq n-1$ such that $j+k = n$ and there are $n-1$ such $j$. Separating these terms, we get the geometric sum of $-1$.

    \begin{equation*}
      \begin{aligned}\sum_{j=1}^{n-1}w^{2(q-p)j}&=w^{2(q-p)}\frac{1-w^{2(q-p)(n-1)}}{1-w^{2(q-p)}}=\frac{w^{2(q-p)}-1}{1-w^{2(q-p)}}=-1.
      \end{aligned}
      \end{equation*}
    Using the above value and a formula for the difference of cosines Eq.~\eqref{eq:lim prob} becomes:
    \begin{equation}
     \lim_{T \rightarrow \infty}\Bar{P}_{T}(p,q) =\frac{1}{2n}-\frac{1}{2n^{2}}+ \lim_{T \rightarrow \infty}\frac{1}{2n^{2}T}\int_{0}^{T}\left(\sum_{j,k=0,j\neq k,j+k\neq n}^{n-1}\text{e}^{\text{i}t(-4\sin{(\pi(j+k)/n)}\sin{(\pi(j-k)/n)})/3}w^{(q-p)(j-k)}\right)\text{d}t.
    \end{equation}
    As before, the third term in the above equation vanishes as $T \rightarrow \infty$. Thus,

   \begin{equation}
  \lim_{T \rightarrow \infty}  \Bar{P}_{T}(p,q)= \frac{1}{2n} -\frac{1}{2n^2}.
    \end{equation}

   The two cases together give:
  \begin{equation}
    P_{T\to\infty}(p,q)=\Pi(p,q)=
    \begin{cases}
    \frac{1}{2n}+\frac{n-1}{2n^2} &p=q\quad \text{or} \quad q-p=n,\\
    \frac{1}{2n}-\frac{1}{2n^2} &p\neq q. \quad \text{and} \quad q-p\neq n.
    \end{cases}
  \end{equation}

\end{document}